\documentclass[11pt, a4paper]{article}
\usepackage[margin=1in]{geometry}
\usepackage[utf8]{inputenc}
\usepackage[T1]{fontenc}
\usepackage{amsfonts,amssymb,mathtools}
\usepackage{amsmath,amsthm,xcolor}
\usepackage{hyperref,cite,fullpage}
\usepackage{libertine}
\hypersetup{
	colorlinks=true,
    linkcolor={red!50!black},
    citecolor={blue!50!black},
    bookmarksopen=true,
	  bookmarksnumbered,
	  bookmarksopenlevel=2,
	  bookmarksdepth=3
}
\usepackage[capitalise]{cleveref}
\usepackage{xspace}
\usepackage{enumerate}
\usepackage{calc}
\usepackage{halloweenmath}
\usepackage{thmtools}
\usepackage{thm-restate}
\usepackage{tikz}
\usetikzlibrary{arrows.meta,patterns}

\definecolor{red}{rgb}{1,0,0}
\definecolor{blue}{rgb}{0,0,1}
\definecolor{green}{rgb}{0,1,0}
\definecolor{yellow}{rgb}{1,1,0}
\definecolor{orange}{rgb}{1,0.647,0}
\definecolor{gold}{rgb}{1,0.843,0}
\definecolor{purple}{rgb}{0.627,0.125,0.941}
\definecolor{gray}{rgb}{0.745,0.745,0.745}
\definecolor{brown}{rgb}{0.647,0.165,0.165}
\definecolor{navy}{rgb}{0,0,0.502}
\definecolor{pink}{rgb}{1,0.753,0.796}
\definecolor{seagreen}{rgb}{0.18,0.545,0.341}
\definecolor{turquoise}{rgb}{0.251,0.878,0.816}
\definecolor{violet}{rgb}{0.933,0.51,0.933}
\definecolor{darkblue}{rgb}{0,0,0.545}
\definecolor{darkcyan}{rgb}{0,0.545,0.545}
\definecolor{darkgray}{rgb}{0.663,0.663,0.663}
\definecolor{darkgreen}{rgb}{0,0.392,0}
\definecolor{darkmagenta}{rgb}{0.545,0,0.545}
\definecolor{darkorange}{rgb}{1,0.549,0}
\definecolor{darkred}{rgb}{0.545,0,0}
\definecolor{lightblue}{rgb}{0.678,0.847,0.902}
\definecolor{lightcyan}{rgb}{0.878,1,1}
\definecolor{lightgray}{rgb}{0.827,0.827,0.827}
\definecolor{lightgreen}{rgb}{0.565,0.933,0.565}
\definecolor{lightyellow}{rgb}{1,1,0.878}
\definecolor{black}{rgb}{0,0,0}
\definecolor{white}{rgb}{1,1,1}

\usetikzlibrary{calc}
\usepackage{lineno}


\newcommand{\cT}{\mathcal{T}}


\newcommand{\tw}{\mathrm{tw}}

\newcommand{\N}{\mathbb{N}}
\newcommand{\Q}{\mathbb{Q}}
\newcommand{\Z}{\mathbb{Z}}

\newcommand{\MWIS}{\textsc{MWIS}\xspace}

\renewcommand{\leq}{\leqslant}
\renewcommand{\geq}{\geqslant}

\newcommand{\wei}{\mathfrak{w}}

\renewcommand{\O}{\mathcal{O}}
\newcommand{\Oh}{\O}

\renewcommand{\setminus}{-}

\newcommand{\forget}{\mathsf{forget}}
\newcommand{\tup}[1]{{\bar #1}}

\newcommand{\cmsotwo}{$\mathsf{CMSO}_2$\xspace}
\newcommand{\cpmsotwo}{$\mathsf{C}_{\leq p}\mathsf{MSO}_2$\xspace}
\newcommand{\Formulae}{\mathsf{Formulae}}
\newcommand{\msotypes}{\mathsf{Types}}
\newcommand{\msotype}{\mathsf{type}}

\newcommand{\im}{\mathrm{im}\ }
\newcommand{\id}{\mathrm{id}}

\newtheorem{theorem}{Theorem}[section]

\newtheorem{corollary}[theorem]{Corollary}

\newtheorem{lemma}[theorem]{Lemma}

\theoremstyle{definition}

\crefname{claim}{Claim}{Claims}

\usepackage{todonotes}


\usepackage{framed}
\usepackage{tabularx}

\newlength{\RoundedBoxWidth}
\newsavebox{\GrayRoundedBox}
\newenvironment{GrayBox}[1]%
   {\setlength{\RoundedBoxWidth}{.93\columnwidth}
    \def\boxheading{#1}
    \begin{lrbox}{\GrayRoundedBox}
       \begin{minipage}{\RoundedBoxWidth}}%
   {   \end{minipage}
    \end{lrbox}
    \begin{center}
    \begin{tikzpicture}%
       \node(Text)[draw=black!20,fill=white,rounded corners,inner sep=2ex,text width=\RoundedBoxWidth]
             {\usebox{\GrayRoundedBox}};
        \coordinate(x) at (current bounding box.north west);
        \node [draw=white,rectangle,inner sep=3pt,anchor=north west,fill=white]
        at ($(x)+(6pt,.75em)$) {\boxheading};
    \end{tikzpicture}
    \end{center}}

\newenvironment{defproblemx}[1]{\noindent\ignorespaces%
                                \FrameSep=6pt%
                                \parindent=0pt%
                \begin{GrayBox}{#1}%
                \begin{tabular*}{\columnwidth}{!{\extracolsep{\fill}}@{\hspace{.1em}} >{\itshape} p{1.5cm} p{0.86\columnwidth} @{}}%
            }{
                \end{tabular*}%
                \end{GrayBox}%
                \ignorespacesafterend
            }

\newcommand{\problemTask}[3]{%
  \begin{defproblemx}{#1}
    Input: & #2 \\
    Task: & #3
  \end{defproblemx}
}

\title{Finding large sparse induced subgraphs in graphs\\
of small (but not very small) tree-independence number}

\author{%
Daniel Lokshtanov%
\thanks{University of California, Santa Barbara.}%
\and%
Michał Pilipczuk%
\thanks{University of Warsaw. Supported by the European Research Council (ERC) under the European Union’s Horizon 2020 research and innovation programme, grant agreement no 948057~(BOBR).}%
\and %
Pawe{\l} Rz\k{a}\.zewski%
\thanks{Warsaw University of Technology \& University of Warsaw. Supported by the National Science Centre grant number 2024/54/E/ST6/00094.}
}

\date{}

\begin{document}

\maketitle

\begin{abstract}
    The independence number of a tree decomposition is the size of a largest independent set contained in a single bag.
    The tree-independence number of a graph $G$ is the minimum independence number of a tree decomposition of $G$.
    As shown recently by Lima et al. [ESA~2024], a large family of optimization problems asking for a maximum-weight induced subgraph of bounded treewidth, satisfying a given \textsf{CMSO}$_2$ property, can be solved in polynomial time in graphs whose tree-independence number is bounded by some constant~$k$.

    However, the complexity of the algorithm of Lima et al. grows rapidly with $k$, making it useless if the tree-independence number is superconstant. In this paper we present a refined version of the algorithm. We show that the same family of problems can be solved in time~$n^{\mathcal{O}(k)}$, where $n$ is the number of vertices of the instance, $k$ is the tree-independence number, and the $\Oh(\cdot)$-notation hides factors depending on the treewidth bound of the solution and the considered \textsf{CMSO}$_2$ property.

    This running time is quasipolynomial for classes of graphs with polylogarithmic tree-independence number; several such classes were recently discovered. Furthermore, the running time is subexponential for many natural classes of geometric intersection graphs --  namely, ones that admit balanced clique-based separators of sublinear size.
\end{abstract}

\section{Introduction}

Tree decompositions and treewidth are among most successful concepts in graph theory, with numerous structural~\cite{DBLP:journals/jgt/HarveyW17,DBLP:journals/jct/RobertsonS86,DBLP:conf/stoc/ChekuriC13,DBLP:journals/actaC/Bodlaender93} and algorithmic applications~\cite{DBLP:journals/actaC/Bodlaender93,DBLP:journals/dam/ArnborgP89,DBLP:journals/jacm/BodlaenderFLPST16} (see \cref{sec:treewidth} for formal definitions).
In the context of the latter, having a tree decomposition of a graph with small (say, constant) width allows us to mimic the standard bottom-up dynamic programming on trees in order to solve many computational problems efficiently.
This idea led to a flurry of algorithmic results and techniques, designed for graphs of bounded treewidth~\cite{DBLP:journals/talg/CyganNPPRW22,DBLP:journals/talg/FockeMINSSW25,DBLP:journals/siamcomp/BasteST23,DBLP:conf/soda/FominLS14,journals/talg/BodlaenderCNPPS16}.

In order to understand which problems can be solved efficiently on given class of graphs,
it is beneficial to design algorithms that do not focus on particular problems, but rather on \emph{families of problems} expressible in a certain way. Such results are typically referred to as \emph{algorithmic meta-theorems}. On a high level, an algorithmic meta-theorem says that every problem of a given type can be solved efficiently in the given class of graphs.
Such ``types of problems'' are usually defined in terms of logic.

Probably the best known algorithmic meta-theorem is the celebrated result of Courcelle~\cite{DBLP:journals/iandc/Courcelle90} which says that every problem expressible in \cmsotwo can be solved in linear time in graphs given with a tree decomposition of constant width. 
Here, \cmsotwo is the {\em{Monadic Second Order}} logic of graphs with quantification over edge subsets and modular predicates. In essence, the logic allows quantification over single vertices and edges as well as over subsets of vertices and of edges.
See \cref{sec:cmso} for more information on \cmsotwo and the result of Courcelle.  

While Courcelle's theorem handles a very broad family of problems, it comes at a cost: the assumption that the treewidth is bounded by a constant is rather restrictive.
In particular, the theorem applies only to sparse graphs. On the other hand, there are classes of dense graphs with very good structural properties, like interval graphs~\cite[Chapter 8]{Golumbic2004}, circular-arc graphs~\cite[Chapter 8.6]{Golumbic2004}, or chordal graphs~\cite[Chapter 4]{Golumbic2004}. Often these properties can be expressed in terms of the existence of a certain tree decomposition whose every bag has some ``nice structure'' (opposed to every bag having bounded size, as in the definition of treewidth).
For instance, chordal graphs are precisely the graphs that admit a tree decomposition where every bag is a clique.
Such decompositions with a ``nice structure'' can be also exploited~algorithmically.

This motivation led to the definition of \emph{tree-independence number} of a graph.
This parameter was independently introduced by Yolov~\cite{DBLP:conf/soda/Yolov18} and by Dallard et al.~\cite{dallard2022firstpaper}; we use the notation and terminology from the latter work.
The independence number of a tree decomposition is the minimum value of $k$ such that every bag induces a subgraph of independence number at most~$k$. The tree-independence number of a graph $G$ is the minimum independence number of a tree decomposition of $G$; see \cref{sec:treewidth} for more information. Thus, chordal graphs are exactly graphs of tree-independence number 1.

The algorithmic motivation behind the definition of tree-independence number was to understand the tractability of the \textsc{Max (Weight) Independent Set} problem (\MWIS). Indeed, a standard dynamic programming approach shows that \MWIS can be solved in polynomial time for graphs of bounded tree-independence number, given with an appropriate tree decomposition.
However, it turns out that many other problems, asking for finding large induced subgraphs that are sparse (or, equivalently in this context, of bounded clique number), can be efficiently solved in the considered classes~\cite{DBLP:conf/soda/Yolov18}.
In particular, many of these problems can be expressed as special cases of a certain abstract problem.
For an integer $t$ and a \cmsotwo sentence $\psi$, we define $(\tw <t,\psi)$-\MWIS (here, \MWIS stands for \emph{maximum-weight induced subgraph}) as follows.

\problemTask{$(\tw <t,\psi)$-\MWIS}%
{A graph $G$ equipped with a weight function $\wei\colon V(G) \to \Q_+$.}%
{Find a set $F \subseteq V(G)$, such that
\begin{itemize}
\item $G[F] \models \psi$,
\item $\tw(G[F]) <t$, and
\item $F$ is of maximum weight subject to the conditions above,
\end{itemize}
or conclude that no such set exists.}

Notable special cases of the $(\tw <t,\psi)$-\MWIS problem are \MWIS, \textsc{Feedback Vertex Set} (equivalently, \textsc{Max Induced Forest}), and \textsc{Even Cycle Transversal}  (equivalently, \textsc{Max Induced Odd Cactus}).  

As shown by Lima et al.~\cite{DBLP:conf/esa/LimaMMORS24}, every problem expressible as $(\tw <t,\psi)$-\MWIS can be efficiently solved in graphs of bounded tree-independence number (given with a decomposition).

\begin{theorem}[Lima et al.~\cite{DBLP:conf/esa/LimaMMORS24}]\label{thm:cmso-old}
For any fixed integers $t,k$ and a fixed \cmsotwo sentence $\psi$, there exists $d$ such that the following holds:
The $(\tw <t,\psi)$-\MWIS problem can be solved in time $\Oh(n^d)$ in $n$-vertex graphs of tree-independence number at most~$k$.
\end{theorem}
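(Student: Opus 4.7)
The plan is to adapt a Courcelle-style dynamic programming over the tree decomposition of $G$, exploiting the fact that the bounded independence number of the bags severely limits how many solution vertices each bag can contain. The key combinatorial observation is a Ramsey-type bound: since $\tw(G[F]) < t$ implies $\omega(G[F]) \le t$, for any bag $X = \beta(v)$ with $\alpha(G[X]) \le k$, the induced subgraph $G[F \cap X]$ has clique number at most $t$ and independence number at most $k$, so by Ramsey's theorem $|F \cap X| < R(t+1, k+1) =: c$, where $c$ depends only on $t$ and $k$. Consequently, at each bag we need to guess only one of $\binom{|X|}{\le c} \le n^c$ candidate intersections $S = F \cap X$.

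First I would compute a tree decomposition $(T, \beta)$ of $G$ with bag-independence number $\Oh(k)$ using the known approximation algorithms of Dallard et al., and refine it into a nice decomposition without increasing the bag-independence number. The crucial structural point is that, for any target set $F$, the restricted decomposition $(T, \beta|_F)$ defined by $\beta|_F(v) := \beta(v) \cap F$ is a tree decomposition of $G[F]$ of width at most $c-1$, so Feferman--Vaught-style composition of \cmsotwo types applies with bounded interfaces. The DP along $T$ then has states $(S, \tau)$ at a node $v$, where $S \subseteq \beta(v)$ with $|S| \le c$ is a candidate for $F \cap \beta(v)$ and $\tau$ is the \cmsotwo type of the pair $(G[F_v], S)$, with $F_v$ the set of solution vertices in bags at or below $v$ and $S$ marked as the interface. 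For each state, the table stores the maximum weight of a realizing $F_v$; introduce, forget, and join nodes are handled by the standard type-composition rules, whose tables depend only on $\psi$ and $c$ and are therefore of constant size for fixed $t, k, \psi$.

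The remaining ingredient is enforcing $\tw(G[F]) < t$. For each fixed $t$ the class of graphs of treewidth less than $t$ is minor-closed, so by the Robertson--Seymour theorem it has a finite obstruction set; since ``contains $H$ as a minor'' is \msotwo-expressible for each fixed $H$, there is a \cmsotwo sentence $\phi_t$ equivalent to $\tw(G) < t$. Only the existence of $\phi_t$ is needed for Theorem~\ref{thm:cmso-old}, and we run the DP with $\psi' = \psi \land \phi_t$. The main obstacle in my view is the consistency of the type tracking with the treewidth constraint at join nodes: one must argue that \cmsotwo types composed along the width-$c$ decomposition $(T, \beta|_F)$ really do determine $\phi_t$ on the full graph $G[F]$. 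This is a standard but technical Feferman--Vaught argument: \cmsotwo types are preserved under gluing along a bounded interface, and here that interface is exactly $S$. The running time is then routine: each of the $\Oh(n)$ nodes contributes $n^c$ subset guesses and a constant (in $\psi, t, k$) number of types per guess, with join nodes combining two tables at a cost bounded by the square of their size, yielding total time $\Oh(n^d)$ for some $d = \Oh(R(t+1, k+1))$ depending on $t$, $k$, and $\psi$.
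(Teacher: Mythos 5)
Your proposal is a correct proof of \cref{thm:cmso-old}, which the paper itself does not reprove (it cites Lima et al.) but instead strengthens to \cref{thm:cmsotech}. Comparing your route with the paper's refined argument is instructive, because the differences explain precisely why your bound on $d$ is much worse than what is needed for the strengthening.

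You bound $|F\cap X|$ using Ramsey's theorem: $\tw(G[F])<t$ gives $\omega(G[F])\le t$, and together with $\alpha(X)\le k$ this yields $|F\cap X|<R(t{+}1,k{+}1)$. The paper instead observes that $\tw(G[F])<t$ forces $\chi(G[F])\le t$, so $F\cap X$ decomposes into at most $t$ independent sets each of size at most $k$, giving the linear bound $|F\cap X|\le kt$. Both bounds are constants for fixed $t,k$ and therefore both suffice for \cref{thm:cmso-old}, but your $c=R(t{+}1,k{+}1)$ is exponential in $\min(t,k)$, while the paper's $\ell=kt$ is polynomial.

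The second and more consequential gap is in how DP states are encoded. You store, per boundary guess $S$, the full rank-$(p,q)$ \cmsotwo type of $(G[F_v],S)$ with boundary $S$ of size up to $c$. The number of such types is a tower of exponentials in $c$; this is fine for \cref{thm:cmso-old} because it is a constant for fixed $t,k,\psi$, but it is the reason your approach cannot yield $n^{\Oh(k)}$. The paper sidesteps this by introducing \emph{signatures} (Section~3): rather than storing the type, it stores a small combinatorial fingerprint of a width-$(t-1)$ decomposition of $G[F_v]$ relative to $B$, consisting of $\Oh(\ell)$ boundary vertices $B'\supseteq B$, the adjacency pattern among them (itself a sparse graph of treewidth $<t$), and the types of $\Oh(\ell)$ pieces each with boundary of constant size $\le 2t$. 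Signatures determine types (\cref{lem:equiv-signature}) but have only $\ell^{\Oh(\ell)}$ possibilities (\cref{lem:number-signature}), which is what drives the $n^{\Oh(k)}$ bound in \cref{thm:cmsotech}.

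Everything else in your proposal matches the paper's framework: the restricted decomposition $(T,\beta|_F)$ as a width-$(c-1)$ decomposition of $G[F]$, the Robertson--Seymour obstruction argument for $\phi_{\tw<t}$ (the paper's \cref{lem:twt}), the Feferman--Vaught compositionality at join nodes (the paper's \cref{prop:mso-type}), and the observation that $\phi_{\tw<t}$ is a subformula of $\psi'$ so feasibility is preserved under type equality. For \cref{thm:cmso-old} your argument is sound; for the sharper result it would have to be combined with the chromatic bound and the signature compression.
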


Let us remark that the assumption that the solution induces a sparse graph is crucial here.
In particular, \textsc{Clique} is \textsf{NP}-hard in graphs of tree-independence number 2,
as \MWIS is \textsf{NP}-hard in triangle-free graphs~\cite{Poljak}.
Interestingly, algorithms for $(\tw <t,\psi)$-\MWIS are known also for other classes of well-structured dense graphs~\cite{MR4752925,DBLP:conf/stoc/GartlandLPPR21,DBLP:conf/soda/ChudnovskyMPPR24,FominTV15}.

\cref{thm:cmso-old} allows us to solve a rich family of optimization problems in classes of constant tree-independence number~\cite{dallard2022secondpaper,DBLP:journals/jgt/AbrishamiACHSV24a}.
However, the fact that the exponent $d$ depends (at least) exponentially on $k$ makes the algorithm from \cref{thm:cmso-old} not very useful for classes where tree-independence is small, but superconstant -- say, logarithmic in $n$.
It turns out that there are some natural classes where such phenomenon can be observed; we discuss them in \cref{sec:applications}.

In this paper, we prove the following refinement of \cref{thm:cmso-old}.

\begin{theorem}\label{thm:cmsotech}
Let $t$ be a fixed integer and $\psi$ a fixed \cmsotwo sentence.
Then the $(\tw <t,\psi)$-\MWIS problem can be solved in time $n^{\Oh(k)}$ in $n$-vertex graphs given along with a tree decomposition of independence number at most $k$ and $n^{\Oh(1)}$ nodes.
\end{theorem}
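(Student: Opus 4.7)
The plan is to perform dynamic programming along the given tree decomposition of $G$, leveraging a single structural observation that drastically cuts down the number of relevant states per bag.

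First I would establish the key structural lemma: for any bag $X$ of the given decomposition and any feasible solution $F$, $|F \cap X| \leq tk$. Indeed, $H := G[F \cap X]$ is an induced subgraph of both $G[F]$ and $G[X]$, so $\tw(H) < t$ and $\alpha(H) \leq k$; since $\chi(H) \leq \tw(H)+1 \leq t$ and $|V(H)| \leq \chi(H)\cdot\alpha(H)$, this yields $|F \cap X| \leq tk$. Consequently, at every bag $X_u$ of the given decomposition there are at most $\binom{|X_u|}{tk} \leq n^{\Oh(k)}$ candidate restrictions $F_u := F \cap X_u$, which is where the $n^{\Oh(k)}$ factor in the final running time will come from.

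Next I would set up the DP state at a node $u$ to consist of: (a) the candidate intersection $F_u$ (at most $n^{\Oh(k)}$ options, $t$ being fixed); (b) a rooted local tree decomposition $\mathcal{T}_u$ of $G[F_u]$ of width~$<t$, together with a distinguished \emph{interface bag} $B_u \subseteq F_u$ of size~$<t$ sitting at its root (only $\Oh_{t,k}(1)$ options, since $|F_u| \leq tk$); and (c) a \cmsotwo type of the partial solution built so far, computed with respect to $B_u$ ($\Oh_{\psi,t}(1)$ options, by the standard Courcelle analysis of CMSO over bounded-width tree decompositions). Multiplying these factors, the number of DP states at each node of the $G$-decomposition is $n^{\Oh(k)}$, which with polynomially many nodes and polynomial processing per transition delivers the target running time.

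The central technical obstacle is to show that the family $\{(\mathcal{T}_u, B_u)\}_u$ picked out by the DP can be glued along the underlying tree of the $G$-decomposition into a genuine tree decomposition of $G[F]$ of width~$<t$, and that this gluing can be driven purely by the DP state. The natural compatibility constraint is: for every edge $uc$ of the $G$-decomposition with $c$ a child of $u$, the interface bag $B_c$ must equal $F_c \cap F_u$ and must be contained in a single bag of $\mathcal{T}_u$, to which $B_c$ is then attached. Each edge of $G[F]$ is then automatically covered inside some $\mathcal{T}_u$ (because any edge of $G$ lives in some bag $X_u$ and both endpoints in $F$ end up in $F_u$), and the connectedness condition for each vertex of $F$ follows by induction along the tree. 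Enforcing this compatibility in the DP transitions, and verifying that the standard Courcelle-style \cmsotwo type composition behaves correctly under the stitching, will constitute the bulk of the proof. The improvement over the theorem of Lima et al.~\cite{DBLP:conf/esa/LimaMMORS24} comes precisely from never enumerating subsets of $X_u$ of size above $tk$, which is the step whose exponent blows up as a function of $k$ in their algorithm.
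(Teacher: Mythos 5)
Your structural observation that $|F \cap X_u| \leq tk$ is exactly the paper's starting point. However, the DP state you propose does not work. You restrict the interface bag $B_u$ to have size $<t$ and require the compatibility $B_c = F_c \cap F_u$ for a child $c$ of $u$. These two demands are mutually inconsistent: $F_c \cap F_u \subseteq F \cap X_c \cap X_u$ can have size up to $tk$ (e.g.\ at a join node of a nice decomposition $X_c = X_u$, so $F_c \cap F_u = F_u$). More fundamentally, every vertex of $F_u = F \cap X_u$ may have neighbors in $F \setminus V_u$, so the separator between the partial solution $F \cap V_u$ and the rest of $F$ is the whole set $F_u$, not a size-$<t$ subset of it. A CMSO type of $(G[F \cap V_u], B_u)$ with $|B_u| < t$ does not determine the type with respect to $F_u$, and the type composition of \cref{prop:mso-type} at join and introduce nodes must glue over the full overlap $F_u$. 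Two partial solutions with identical states in your scheme can therefore behave differently after gluing, so correctness of the DP fails.

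The paper keeps types with respect to the full $B = F \cap X_u$ of size $\leq kt$, which makes compositionality immediate, and then separately proves that only $\ell^{\Oh(\ell)} = k^{\Oh(k)}$ such types can arise for boundaried graphs of treewidth $<t$ (rather than a $q$-fold-exponential tower). The bound comes from a \emph{signature}: taking a width-$<t$ tree decomposition of $G[F]$, one forms the LCA closure $Q$ of nodes covering the boundary, and records (i) the induced adjacency on $\bigcup_{u\in Q}X_u$, of size $\Oh(t\ell)$, (ii) the $\Oh(\ell)$ components of $T-Q$, each with an interface of at most $2t$ vertices, and (iii) for each component, its type with respect to its small interface. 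Equal signatures force equal types (\cref{lem:equiv-signature}), there are $\ell^{\Oh(\ell)}$ signatures (\cref{lem:number-signature}), and the DP keeps a maximum-weight representative per signature class and compresses after each transition (\cref{lem:compress}). Your intuition that the width-$<t$ decomposition of $G[F]$ yields small interfaces is correct, but those interfaces live \emph{inside} the signature (each $T-Q$ component has $\Oh(t)$-size boundary), not as a replacement for the full DP separator $F\cap X_u$; you would need essentially the signature machinery to make your plan work.
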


The key difference, compared to \cref{thm:cmso-old}, is that the dependence of running time of the algorithm in \cref{thm:cmsotech} on $k$ is single-exponential. This makes \cref{thm:cmsotech} useful to obtain quasipolynomial-time or subexponential-time algorithms in classes where the tree-independence number is, respectively, polylogarithmic or sublinear.

Let us remark that the requirement that the input graph is given along with a suitable tree decompostion is not very restrictive.
Indeed, Dallard et al.~\cite{dallard2022computing} proved that given an $n$-vertex graph with tree-independence number at most $k$, in time  $2^{\Oh(k^2)}n^{\Oh(k)}$ one can obtain a tree decomposition with independence number at most $8k$.
By combining this result with our \cref{thm:cmsotech}, we obtain the following result.

\begin{theorem}\label{thm:cmain}
Let $t$ be a fixed integer and $\psi$ be a fixed \cmsotwo sentence.
Then the $(\tw <t,\psi)$-\MWIS problem can be solved in time $2^{\Oh(k^2)}n^{\Oh(k)}$ in $n$-vertex graphs of tree-independence number at most~$k$.
\end{theorem}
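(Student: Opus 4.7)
The plan is simply to pipeline the two ingredients already at our disposal. First, I would run the algorithm of Dallard et al.~\cite{dallard2022computing} on the input graph $G$; in time $2^{\Oh(k^2)}n^{\Oh(k)}$ it produces a tree decomposition $(T,\beta)$ of $G$ of independence number at most $8k$. By a standard clean-up step---contracting edges of $T$ joining bags whose union keeps the independence number at most $8k$, or merely trimming $T$ to at most $|V(G)|$ nodes in the usual way---we may assume that $T$ has $\Oh(n)$ nodes, which in particular is $n^{\Oh(1)}$. Thus the output of this first phase meets exactly the hypothesis of \cref{thm:cmsotech} with parameter $8k$ in place of $k$.

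Second, I would feed the pair $(G,(T,\beta))$ into \cref{thm:cmsotech}. Since the independence number of $(T,\beta)$ is at most $8k=\Oh(k)$ and the decomposition has polynomially many nodes, the theorem solves $(\tw<t,\psi)$-\MWIS in time $n^{\Oh(8k)}=n^{\Oh(k)}$. Summing the running times of the two phases,
\[
    2^{\Oh(k^2)}n^{\Oh(k)} + n^{\Oh(k)} \;=\; 2^{\Oh(k^2)}n^{\Oh(k)},
\]
which matches the advertised bound.

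There is no genuine obstacle here: the statement is essentially a corollary obtained by composing the decomposition-finding result of~\cite{dallard2022computing} with \cref{thm:cmsotech}. The only small care point is verifying that the tree decomposition returned in the first phase can be assumed to have $n^{\Oh(1)}$ nodes, which is a routine preprocessing observation about tree decompositions; once that is in place the composition is immediate.
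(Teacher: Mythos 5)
Your proposal is exactly the paper's proof: the paper derives \cref{thm:cmain} by combining the decomposition algorithm of Dallard et al.\ with \cref{thm:cmsotech}, just as you do. Your extra remark about trimming the decomposition to $n^{\Oh(1)}$ nodes is a reasonable (and correct) bit of diligence that the paper leaves implicit.
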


We remark that if $k = \Omega(\sqrt{n})$, then the $2^{\Oh(k^2)}$ factor in the running time of the algorithm from \cref{thm:cmain} makes it somewhat useless. Thus, in such cases we need to rely on obtaining an appropriate tree decomposition in some other way.

\subsection{Algorithmic consequences.}\label{sec:applications}
The main motivation behind improving \cref{thm:cmso-old} to \cref{thm:cmsotech} is to use it for classes with superconstant tree-independence number.
Let us discuss some examples of such classes.

\paragraph{Classes of polylogarithmic tree-independence number.}
Let us start with a few definitions. A \emph{hole} in a graph is an induced cycle with at least 4 vertices. A hole is even is it has an even number of vertices.

A \emph{theta} is a graph consisting of two distinct vertices $a, b$ and three paths $P_1, P_2, P_3$ from $a$ to
$b$ such that the union of any two of them induces a hole.
A \emph{pyramid} is a graph consisting of a vertex $a$ and a triangle $b_1,b_2,b_3$, and three paths $P_i$ from $a$ to $b_i$ for $i\in \{1,2,3\}$  such that the union of any two of these paths induces a hole.
Finally, a \emph{generalized prism} is a graph consisting of two (not necessarily disjoint) triangles $a_1,a_2,a_3$ and $b_1,b_2,b_3$, and three
paths $P_i$ from $a_i$ to $b_i$ for $i\in \{1,2,3\}$, such that the union of any two of these paths induces a hole.
A \emph{three-path configuration}, or a 3PC for short, is a graph that is either a theta, a pyramid, or a generalized~prism.

By $S_{q,q,q}$ we denote the graph obtained from three paths, each with $q+1$ vertices, by picking one endvertex of each path and identifying them.
By $K_{q,q}$ we denote the complete graph with each side of size $q$.
The $q \times q$-wall, denoted by $W_{q \times q}$ is the $q \times q$ hexagonal grid.
Then $\mathcal{L}_q$ is the family of graphs such that every $G \in \mathcal{L}_q$ is the line graph of some subdivision of $W_{q \times q}$.

For a family $\mathcal{H}$ of graphs, we say that $G$ is {\em{$\mathcal{H}$-free}} if it does not contain any graph from $\mathcal{H}$ as an induced subgraph.
A graph is 3PC-free (resp., even-hole-free, theta-free, pyramid-free) if none of its induced subgraphs is a 3PC (resp., even hole, theta, pyramid).
To improve the readability, we use parenthesis when we forbid several families simultaneously; e.g., 3PC-free graphs are precisely (theta, pyramid, generalized prism)-free graphs.

The next theorem summarizes known results about families of graphs where the tree-independence number is bounded by a polylogarithmic function of the number of vertices.

\begin{theorem}[Chudnovsky et al.~\cite{DBLP:journals/jctb/ChudnovskyHLS26}, Chudnovsky et al.~\cite{DBLP:conf/soda/ChudnovskyGHLS25}, Chudnovsky et al.~\cite{DBLP:journals/corr/abs-2501-14658}, Chudnovsky et al.~\cite{DBLP:journals/corr/abs-2509-15458}]\label{thm:bounds-for-treealpha}
    Let $q$ be a fixed constant.
    Let $G$ be an $n$-vertex graph for $n \geq 2$. The tree-independence number of $G$ is bounded by 
    \begin{itemize}
    \item $\Oh(\log^2 n)$ if $G$ is 3PC-free~\cite{DBLP:journals/jctb/ChudnovskyHLS26};
    \item $\Oh(\log^{10}n)$ if $G$ is even-hole-free~\cite{DBLP:conf/soda/ChudnovskyGHLS25};
    \item $\Oh(\log^4 n)$ if $G$ is ($S_{q,q,q}$, $K_{q,q}$; $\mathcal{L}_q$)-free~\cite{DBLP:journals/corr/abs-2501-14658}; and    
    \item $\Oh(\log^{c_q}n)$ if $G$ is (theta, pyramid, $\mathcal{L}_q$)-free, where $c_q$ is some constant depending on $q$~\cite{DBLP:journals/corr/abs-2509-15458}. \end{itemize}
\end{theorem}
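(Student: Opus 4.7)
The statement is a \emph{compilation theorem}: each of the four bullets is the main result of a separate paper (\cite{DBLP:journals/jctb/ChudnovskyHLS26}, \cite{DBLP:conf/soda/ChudnovskyGHLS25}, \cite{DBLP:journals/corr/abs-2501-14658}, \cite{DBLP:journals/corr/abs-2509-15458}, respectively), so the ``proof'' is simply to invoke these references. My plan is therefore to state each bullet as a direct application of the corresponding theorem from the cited paper, after briefly verifying that the notion of tree-independence number used in those papers matches the one introduced earlier in \cref{sec:treewidth}. Because the four papers developed their definitions in a compatible way (following the Dallard et al.\ formulation), no translation step is needed beyond citing.

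Although there is nothing new to prove here, it is instructive to note the common high-level strategy shared by all four proofs, which I would mention in the surrounding discussion. In each case, one establishes a \emph{balanced separator theorem} of the following form: every induced subgraph $H$ of $G$ admits a balanced separator $S \subseteq V(H)$ such that $\alpha(H[S])$ is at most polylogarithmic in $|V(H)|$. Given such a separator theorem, a standard recursive decomposition yields a tree decomposition of $G$ whose every bag has independence number polylogarithmic in $n$, and the exponent of the polylogarithm is governed by the exponent in the separator bound.

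The work specific to each bullet lies in producing the balanced separator with small independence number, and this is the step that is genuinely hard and differs across the four results. For 3PC-free graphs one exploits the ``three-in-a-tree'' structural machinery to localize separators around bounded collections of cliques; for even-hole-free graphs the analysis is considerably more intricate because even holes are abundant, yielding the large exponent $10$; for $(S_{q,q,q},K_{q,q},\mathcal{L}_q)$-free graphs one combines Ramsey-type arguments with the absence of subdivided stars to force the existence of a small-independence separator; and for the $(\text{theta},\text{pyramid},\mathcal{L}_q)$-free class one uses a refinement of the 3PC machinery that degrades the exponent into the $q$-dependent constant $c_q$.

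In short, my ``proof'' here is just to cite the four papers and observe that each bullet follows directly; the real content is in those papers, and the main obstacle -- constructing balanced separators of small independence number in the respective graph classes -- has already been overcome there. No further argument beyond bookkeeping of the bounds is needed on our side.
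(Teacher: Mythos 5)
Your proposal is correct and matches the paper's treatment exactly: this theorem is stated purely as a summary of results imported from the four cited works, and the paper supplies no proof of its own beyond the citations. Your additional commentary on the shared separator-based strategy is accurate background but, as you note, not required.
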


Observe that \cref{thm:cmain} implies that the $(\tw <t,\psi)$-\MWIS problem can be solved in quasipolynomial time in every class of graphs where the tree-independence number is bounded by a polylogaritmic function of the number of vertices. Combining this observation with \cref{thm:bounds-for-treealpha}, we obtain the following corollary.

\begin{corollary}
    For any fixed $t$ and a \cmsotwo sentence $\psi$, the $(\tw <t,\psi)$-\MWIS problem can be solved in quasipolynomial time for
    \begin{itemize}
    \item 3PC-free graphs;
    \item even-hole-free graphs;
    \item ($S_{q,q,q}$, $K_{q,q}$, $\mathcal{L}_q$)-free graphs, for any fixed $q$; and 
    \item (theta, pyramid, $\mathcal{L}_q$)-free graphs, for any fixed $q$.           
    \end{itemize}
\end{corollary}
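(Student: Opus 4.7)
The plan is to combine \cref{thm:cmain} with \cref{thm:bounds-for-treealpha} in a purely mechanical fashion. Fix one of the four graph classes appearing in the statement and let $G$ be an $n$-vertex graph from that class. By \cref{thm:bounds-for-treealpha}, the tree-independence number of $G$ satisfies $k := \tin(G) = \Oh(\log^c n)$ for some class-dependent constant $c$: namely $c = 2$ for 3PC-free graphs, $c = 10$ for even-hole-free graphs, $c = 4$ for $(S_{q,q,q}, K_{q,q}, \mathcal{L}_q)$-free graphs, and $c = c_q$ for (theta, pyramid, $\mathcal{L}_q$)-free graphs. I would then apply \cref{thm:cmain} directly to $G$; note that this theorem does not require a decomposition to be supplied in the input, since it internally invokes the approximation algorithm of Dallard et al.\ to construct a tree decomposition of independence number $\Oh(k)$ within the claimed running time.

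Substituting the polylogarithmic bound on $k$ into the running time of \cref{thm:cmain} yields
\[ 2^{\Oh(k^2)} \cdot n^{\Oh(k)} \;=\; 2^{\Oh(\log^{2c} n)} \cdot 2^{\Oh(\log^{c+1} n)} \;=\; 2^{\Oh(\log^{\max(2c,\, c+1)} n)}, \]
which is quasipolynomial in $n$. This establishes the conclusion for all four classes. The constants hidden in the $\Oh(\cdot)$-notation depend on $t$, $\psi$, and the class (through $c$ and $q$), but not on $n$, so the bound is uniform across the class once $t$, $\psi$, and $q$ are fixed.

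There is no genuine obstacle in this argument; the corollary is a direct mechanical consequence of the two ingredients. What deserves emphasis is rather why the sharpened \cref{thm:cmain} is essential and why \cref{thm:cmso-old} would not suffice: in the older statement the exponent $d$ is allowed to grow (at least) exponentially with $k$, so plugging in $k = \Oh(\log^c n)$ would yield only a bound of the shape $n^{2^{\Oh(\log^c n)}}$, which is not quasipolynomial. The entire engineering behind \cref{thm:cmain} is precisely to reduce the dependence on $k$ to single-exponential in $k^2$ and polynomial in $n$ with exponent linear in $k$, so that this corollary drops out without any further work.
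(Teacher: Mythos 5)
Your argument is correct and is exactly the paper's intended derivation: plug the polylogarithmic bound on the tree-independence number from \cref{thm:bounds-for-treealpha} into the running time of \cref{thm:cmain} (which already handles the construction of the decomposition via Dallard et al.), and observe that $2^{\Oh(k^2)} n^{\Oh(k)}$ with $k = \Oh(\log^c n)$ is quasipolynomial. The closing remark about why \cref{thm:cmso-old} would not suffice is also accurate, though not needed for the proof itself.
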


\paragraph{Clique-based separators in geometric intersection graphs.}
The next family of examples comes from intersection graphs of certain geometric objects.
For a graph $G$ and a real number $\beta \in (0,1)$, a \emph{$\beta$-balanced clique-based separator} is a family $\mathcal{S}$ of subsets of $V(G)$, each inducing a clique in $G$, so that every component of $G - \bigcup_{S \in \mathcal{S}}$ has at most $\beta \cdot |V(G)|$ vertices. The size of a clique-based separator $\mathcal{S}$ is $|\mathcal{S}|$, i.e., the number of cliques.\footnote{An astute reader might notice that typically in clique-based separators we do not optimize size, but weight, defined as $\sum_{S \in \mathcal{S}} \log (|S| +1)$. However, in the context of our work, size is a more appropriate measure. Furthermore, these two measures are very close: the size is bounded by the weight, which is in turn bounded by the size times the logarithm of the number of vertices.}
We say that a class $\mathcal{G}$ of graphs \emph{admits balanced clique-based separators of size $f(n)$}, for some function $f$, if there is an absolute constant $\beta <1$ such that every $n$-vertex graph in $\mathcal{G}$ has a $\beta$-balanced clique-based separator of size $f(n)$.

It turns out that many natural classes of geometric intersection graphs have balanced separators of sublinear size. 
We list some of the results below, for more, we refer to the work of de Berg et al.~\cite{DBLP:journals/siamcomp/BergBKMZ20}, de Berg et al.~\cite{DBLP:journals/algorithmica/BergKMT23}, and Dallard et al.~\cite{DBLP:journals/corr/abs-2506-12424}. There you can also find precise definitions of classes listed in \cref{thm:bounds-for-cliquebased}.

\begin{theorem}[de Berg et al.~\cite{DBLP:journals/siamcomp/BergBKMZ20}, de Berg et al.~\cite{DBLP:journals/algorithmica/BergKMT23}, Dallard et al.~\cite{DBLP:journals/corr/abs-2506-12424}]\label{thm:bounds-for-cliquebased}
    The class $\mathcal{G}$ admits balanced clique-based separators of size $\Oh(n^\delta)$, where:
    \begin{itemize}
        \item $\mathcal{G}$ is the class of intersection graphs of convex fat objects  in $\mathbb{R}^2$ and $\delta = 1/2$~\cite{DBLP:journals/siamcomp/BergBKMZ20};
        \item $\mathcal{G}$ is the class of map graphs and $\delta = 2/3$~\cite{DBLP:journals/algorithmica/BergKMT23};
        \item $\mathcal{G}$ is the class of pseudodisk intersection graphs and $\delta = 2/3$~\cite{DBLP:journals/algorithmica/BergKMT23}; and
        \item $\mathcal{G}$ is the class of contact segment graphs and $\delta = 2/3$~\cite{DBLP:journals/corr/abs-2506-12424}.
    \end{itemize}
    Furthermore, if the graph is given with a geometric representation, then the separator can be found in time polynomial in the size of the representation.
\end{theorem}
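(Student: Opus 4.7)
Since the statement is a compilation of four previously established separator theorems, the proof in this paper will simply cite the four references. Nonetheless, one can sketch a unified plan that mirrors what each of the cited proofs does, splitting the argument into item (1), which uses a direct planar geometric separator, versus items (2)--(4), which all pass through a planar-graph proxy.

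For convex fat objects in $\mathbb{R}^2$, my plan would follow the Miller--Thurston--Vavasis strategy as adapted by de Berg et al.: first normalize the instance, then pick a random circle $C$ via the scaling lemma for fat objects, and argue through a standard packing bound that only $\Oh(\sqrt{n})$ input objects cross $C$. These crossings are partitioned into a constant number of \emph{scale classes}. The key step --- and the place where fatness is essential --- is to show that within one scale class the crossing objects are pairwise intersecting, so each class forms a clique. This yields $\Oh(\sqrt{n})$ cliques giving $\delta = 1/2$, and the random construction is algorithmic once the geometric representation is available.

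For the remaining three classes the plan is to reduce to a planar-graph separator. Map graphs admit a planar witness where vertices are countries and cliques arise from common meeting points; pseudodisk intersection graphs admit a planar support of their depth-$1$ arrangement, exploiting that pseudodisks behave locally like disks; contact segment graphs are planar by construction. In each case I would apply the Lipton--Tarjan planar separator theorem to the planar proxy, obtain an $\Oh(n^{2/3})$-vertex separator, and pull each proxy vertex back to a clique in the original intersection graph. The pull-back --- showing that the objects associated to a single planar vertex mutually intersect --- is what truly distinguishes the three cited papers and is the main obstacle; once it is in place, the polynomial-time clause is automatic, since every ingredient (random circle, planar proxy, Lipton--Tarjan partition) is constructive as soon as the geometric representation is given.
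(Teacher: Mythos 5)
You correctly observe that the paper does not prove this theorem---it is a compilation of results imported from the cited works, and the paper offers no argument beyond the citations. That part of your proposal matches the paper exactly.

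However, your optional sketch of the underlying arguments contains two concrete errors worth flagging. First, contact segment graphs are \emph{not} planar: any number of segments passing through a common point pairwise touch, so these graphs contain arbitrarily large cliques (in particular $K_5$). Dallard et al.\ therefore cannot simply invoke a planar separator on the contact graph itself; like the map-graph and pseudodisk cases, their argument must pass through an auxiliary planar structure together with a clique-recovery step. Second, even granting a planar proxy with $\Oh(n)$ vertices, the Lipton--Tarjan planar separator theorem yields a balanced separator of size $\Oh(\sqrt{n})$, not $\Oh(n^{2/3})$. The $n^{2/3}$ bound in the de Berg et al.\ and Dallard et al.\ results does not fall out of Lipton--Tarjan applied to a same-size proxy; it arises from a more delicate trade-off between the number of cliques and their sizes, and is known to be tight for map graphs. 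So the step ``apply Lipton--Tarjan to the proxy, obtain $\Oh(n^{2/3})$'' as written does not go through.
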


The existence of balanced clique-based separators with sublinear size allows us to solve many computational problems, like \MWIS or  \textsc{Feedback Vertex Set}, in \emph{subexponential} time~\cite{DBLP:journals/siamcomp/BergBKMZ20}. However, to the best of our knowledge, no metatheorem is known in this setting.

Let us argue that our \cref{thm:cmsotech} can be applied here.
First, it is not hard to observe that when combined with the usual approach of building tree decompositions from balanced separators,
the existence of sublinear balanced clique-based separators yields tree decompositions of sublinear independence number (see \cite[Theorem 20]{DBLP:journals/tcs/Bodlaender98} or \cite[Lemma 2]{DBLP:conf/wads/AnCO23} or \cite[Lemma 1.1]{DBLP:journals/corr/abs-2506-12424}).

\begin{lemma}\label{lem:from-sep-to-td}
Let $\mathcal{G}$ be a hereditary class that admits balanced clique-based separators of size $\Oh(n^\delta)$ for some $\delta <1$.
Then every $n$-vertex graph from $\mathcal{G}$ has tree-independence number $\Oh(n^\delta)$.

Furthermore, if balanced separators can be found in polynomial time, then the required tree decomposition can also be constructed in polynomial time.
\end{lemma}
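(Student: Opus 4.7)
The plan is to apply the standard recursive construction of a tree decomposition from balanced separators, exploiting hereditariness of $\mathcal{G}$ and the key fact that a union of $k$ cliques in $G$ has independence number at most~$k$.

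First, given an $n$-vertex graph $G \in \mathcal{G}$, I invoke the assumption to obtain a $\beta$-balanced clique-based separator $\mathcal{S}$ with $|\mathcal{S}| \le c n^\delta$ for some absolute constant~$c$, and set $S := \bigcup_{Q \in \mathcal{S}} Q$. Since $G[S]$ is covered by $|\mathcal{S}|$ cliques and any independent set can take at most one vertex per clique, $\alpha(G[S]) \le |\mathcal{S}| \le c n^\delta$. By hereditariness, every connected component $C$ of $G - S$ satisfies $G[C] \in \mathcal{G}$ and $|C| \le \beta n$, so I recursively compute a tree decomposition $(T_C, \{X^C_t\}_{t \in V(T_C)})$ of $G[C]$.

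Second, I assemble these decompositions into a decomposition of $G$: create a new root node $r$ with bag $X_r := S$; for each component $C$ replace every bag $X^C_t$ by $X^C_t \cup S$, and attach the resulting tree as a subtree rooted at a child of $r$. Verification of the tree-decomposition axioms is routine: edges inside $G[C]$ are covered by the recursive decomposition of $G[C]$; edges inside $G[S]$ are covered by $X_r$; edges between $S$ and a component $C$ are covered by any bag of $T_C$ containing the $C$-endpoint, since every such bag has been augmented with~$S$. The connectivity axiom holds for vertices of $S$ because they sit in $X_r$ and in all augmented bags of each subtree, and for vertices of $C$ by the recursive invariant.

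Third, for the independence number, each bag of the constructed decomposition is either $X_r = S$, with $\alpha(G[S]) \le c n^\delta$, or of the form $X^C_t \cup S$ with $X^C_t \cap S = \emptyset$; in the latter case, any independent set $I$ splits as $(I \cap X^C_t) \sqcup (I \cap S)$, whence $\alpha(G[X^C_t \cup S]) \le \alpha(G[X^C_t]) + c n^\delta$. Letting $f(n)$ denote the resulting tree-independence number of an $n$-vertex graph from $\mathcal{G}$ produced by this construction, I obtain the recurrence $f(n) \le f(\beta n) + c n^\delta$, which unfolds to $f(n) \le c n^\delta \sum_{i \ge 0} \beta^{i \delta} = O(n^\delta)$ since $\beta < 1$. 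The procedure runs in polynomial time because the recursion depth is $O(\log n)$, the total number of vertices across all subproblems at a single level is at most $n$, and separators can be computed in polynomial time by assumption. The only conceptual point is that \emph{clique}-based separators are what convert the $O(n^\delta)$ bound on $|\mathcal{S}|$ into an $O(n^\delta)$ bound on the independence number of the root bag; everything else is the textbook reduction from balanced separators to tree decompositions, and I do not expect any real obstacle.
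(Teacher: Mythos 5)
Your proof is correct and follows exactly the approach the paper has in mind: the paper does not write out an argument for this lemma but instead calls it ``the usual approach of building tree decompositions from balanced separators'' and cites three references that all do the same recursive construction you describe (compute a separator, recurse on components, augment every bag with the separator, make the separator the root bag). Your key observation---that a union of $m$ cliques has independence number at most $m$, so the $O(n^\delta)$ bound on $|\mathcal{S}|$ transfers to an $O(n^\delta)$ bound on $\alpha(S)$---is precisely the one non-routine step, and you isolate it cleanly. The recurrence analysis, the verification of the decomposition axioms, and the running-time argument are all sound.

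One small technical remark: the geometric bound $\sum_{i\ge 0}\beta^{i\delta}=O(1)$ requires $\delta>0$, not merely $\beta<1$; if $\delta=0$, the series does not converge and the finite depth of the recursion only gives an $O(\log n)$ bound. This is immaterial for the lemma as used (all applications have $\delta\in\{1/2,2/3\}$, and the statement is implicitly about $\delta>0$), but it would be worth a sentence to justify the convergence of the sum explicitly via $\beta^\delta<1$.
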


Now, combining \cref{thm:bounds-for-cliquebased}, \cref{lem:from-sep-to-td}, and our \cref{thm:cmain}, we obtain the following.

\begin{theorem}\label{thm:cliquebased}
    For any fixed $t$ and a \cmsotwo sentence $\psi$, the $(\tw <t,\psi)$-\MWIS problem can be solved in time
    \begin{itemize}
        \item $2^{\Oh(\sqrt{n}\log n)}$ in the class of intersection graphs of convex fat objects in $\mathbb{R}^2$; and
        \item $2^{\Oh(n^{2/3}\log n)}$ in the class of map graphs, of pseudodisk intersection graphs, or of contact segment graphs.
    \end{itemize}
\end{theorem}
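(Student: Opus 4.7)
The plan is to chain together the three ingredients already set up in the paper. Fix one of the four classes $\cG$ listed in the theorem, let $\delta$ be the exponent $\cG$ enjoys in \cref{thm:bounds-for-cliquebased} ($\delta=1/2$ for convex fat objects in the plane, $\delta=2/3$ for the other three), and consider an $n$-vertex input graph $G \in \cG$ given together with its geometric representation (so that the algorithmic addendum of \cref{thm:bounds-for-cliquebased} applies).

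First, I would apply the polynomial-time part of \cref{thm:bounds-for-cliquebased} to produce, on demand, a $\beta$-balanced clique-based separator of size $\Oh(n^\delta)$ in any induced subgraph of $G$ still supplied with its inherited geometric representation. Second, I would plug this into the recursive separator-based construction that underlies \cref{lem:from-sep-to-td}: because each recursive call shrinks every piece by a factor of $\beta < 1$, the recursion has depth $\Oh(\log n)$ and the resulting tree decomposition $\cT$ of $G$ has $n^{\Oh(1)}$ nodes and independence number $k = \Oh(n^\delta)$; by the algorithmic clause of \cref{lem:from-sep-to-td}, the whole $\cT$ is built in polynomial time.

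Third, I would feed $G$ together with $\cT$ into \cref{thm:cmsotech}. Since $t$ and $\psi$ are fixed, the running time is $n^{\Oh(k)} = n^{\Oh(n^\delta)} = 2^{\Oh(n^\delta \log n)}$, which specializes to $2^{\Oh(\sqrt{n}\log n)}$ when $\delta = 1/2$ and to $2^{\Oh(n^{2/3}\log n)}$ when $\delta = 2/3$, matching the two bounds claimed in the theorem.

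The one subtlety worth flagging, and the only real design choice in the proof, is that I route the argument through \cref{thm:cmsotech} rather than through the packaged \cref{thm:cmain}. The latter would contribute an extra $2^{\Oh(k^2)}$ factor from invoking the decomposition-finding algorithm of Dallard et al., and with $k = \Theta(n^\delta)$ this factor equals $2^{\Omega(n^{2\delta})}$, which is no longer subexponential for either $\delta = 1/2$ or $\delta = 2/3$. This detour is legitimate precisely because the geometric representation lets us construct a tree decomposition of independence number $\Oh(n^\delta)$ directly from \cref{lem:from-sep-to-td}, bypassing the $2^{\Oh(k^2)}$ overhead. Beyond being mindful of this point, there is no real obstacle: the proof is simply the composition of three black boxes in the correct order.
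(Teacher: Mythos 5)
Your proof is correct and follows the route the paper intends: obtain a tree decomposition of independence number $k = \Oh(n^\delta)$ via \cref{lem:from-sep-to-td} (constructed in polynomial time from the separators of \cref{thm:bounds-for-cliquebased}, given the geometric representation), then invoke \cref{thm:cmsotech} to get running time $n^{\Oh(k)} = 2^{\Oh(n^\delta \log n)}$. The subtlety you flag is real and in fact exposes a small inaccuracy in the paper's own phrasing, which states that \cref{thm:cliquebased} follows by combining \cref{thm:bounds-for-cliquebased}, \cref{lem:from-sep-to-td}, and \cref{thm:cmain}: as you observe, the $2^{\Oh(k^2)}$ factor of \cref{thm:cmain} would give $2^{\Omega(n)}$ for $\delta=1/2$ and $2^{\Omega(n^{4/3})}$ for $\delta=2/3$, exceeding the claimed bounds, so the correct ingredient is \cref{thm:cmsotech} applied to the explicitly constructed decomposition --- precisely what the paper's own remark after \cref{thm:cmain} anticipates.
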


\section{Basic notation and tools}

For a nonnegative integer $p$, we denote $[p]\coloneqq \{1,\ldots,p\}$. For a set $A$, by $\id_A$ we denote the identity function on $A$.

We use standard graph notation and terminology. All graphs considered in this paper are finite, simple, and undirected; that is, a graph $G$ consists of a finite vertex set $V(G)$ and an edge set $E(G)\subseteq \binom{V(G)}{2}$.
We also consider vertex-weighted graphs, i.e., graphs $G$ equipped with a function $\wei \colon V(G) \to \mathbb{Q}_+$.
We assume that arithmetic operations on weights can be performed in constant time.
For a set of $X$ vertices, we define $\wei(X) = \sum_{v \in X}\wei(v)$.

For a graph $G$, by $\alpha(G)$ we denote the number of vertices in a largest independent set in $G$.
All subgraphs in this paper are induced, and we often identify such subgraphs with their vertex sets.
For example, for a set $X \subseteq V(G)$, we write $\alpha(X)$ as a shorthand for $\alpha(G[X])$.

\subsection{Tree decompositions and tree-independence}\label{sec:treewidth}

A tree decomposition of a graph $G$ is a pair $\cT = (T, \{X_u\}_{u \in V(T)})$, where $T$ is a tree and,
for every $u \in V(T)$, $X_u$ is a subset of $V(G)$ called a \emph{bag}, so that the following conditions are met:
\begin{enumerate}[(i)]
    \item for every vertex $v \in V(G)$ there is $u \in V(T)$ such that $v \in X_u$;
    \item for every edge $vw \in E(G)$ there is $u \in V(T)$ such that $w,v \in X_u$; and
    \item for every vertex $v \in V(G)$, the set of nodes $u \in V(T)$ such that $v \in X_u$ induces a connected subgraph of $T$.
\end{enumerate}
To avoid confusion between $V(G)$ and $V(T)$, the elements of $V(T)$ will be called {\em{nodes}}.
Typically, we consider $T$ as a rooted tree, which defines the ancestor-descendant relation on its nodes.
This relation is reflexive, i.e., we consider every node its own ancestor as well as its own descendant.
For a node $u \in V(T)$, we define $V_u \coloneqq \bigcup \{X_{u'}\colon u' \text{ is a descendant of }u\}$.

The \emph{width} of a tree decomposition $\cT$ is defined as $\max_{u \in V(T)} |X_u|-1$.
The \emph{treewidth} of a graph $G$, denoted by $\tw(G)$, is the minimum width of a tree decomposition of $G$.

While computing treewidth is \textsf{NP}-hard~\cite{ArnborgCorneilProskurowski:1987}, we can efficiently verify whether $\tw(G) <t$,  for any fixed $t$.
\begin{theorem}[Bodlaender~\cite{DBLP:journals/siamcomp/Bodlaender96}]\label{thm:treewidthfpt}
    Let $t$ be a fixed constant.
    Given an $n$-vertex graph $G$, in time $\Oh(n)$ one can compute a tree decomposition of $G$ of width at most $t$, or decide that $\tw(G) > t$.
\end{theorem}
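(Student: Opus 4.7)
The plan is to follow a bootstrapping strategy: reduce the problem on $G$ to a strictly smaller graph in linear time, recurse, and lift the solution back. Since $t$ is fixed, the recursion must reduce the number of vertices by a constant factor at each step while performing only linear work, so that the recurrence $T(n) \le T((1-c)n) + \O(n)$ resolves to $T(n) = \O(n)$.

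For the lifting step, the plan is to establish the following intermediate result: given an $n$-vertex graph $H$ together with a tree decomposition of $H$ of width at most some constant $w=w(t)$, one can in $\O(n)$ time either produce a tree decomposition of $H$ of width at most $t$ or correctly report that $\tw(H)>t$. This can be obtained by bottom-up dynamic programming on the input decomposition. The DP state at each bag records the equivalence class of partial decompositions of the already-processed subgraph, where two partial decompositions are equivalent if they agree on their interaction with the current bag, i.e., on the "types" of bag-vertices in the hypothetical narrower decomposition being built. Because both $w$ and $t$ are constants, the number of such equivalence classes is bounded by a constant depending only on $t$, so each node of the decomposition is processed in $\O(1)$ time.

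For the reduction step, the plan is to produce in linear time a smaller graph $H$ together with a procedure to lift a tree decomposition of $H$ of width at most $t$ back to one of $G$ of width at most some constant $w(t)$. The classical way to do this is via \emph{matching contraction}: in linear time, one either finds a matching $M$ of size $\Omega(n)$ consisting of edges whose endpoints have bounded degree, or certifies that the graph has a local structure inconsistent with $\tw(G) \le t$ (such as too many vertices of too-large degree). Contracting $M$ yields a minor $H$ with $|V(H)| \le (1-c)|V(G)|$ for some constant $c>0$, and because treewidth is minor-monotone, $\tw(H) \le \tw(G)$. After recursively decomposing $H$, each contracted edge can be "uncontracted" locally in the decomposition by inserting both endpoints into the bags containing the merged vertex, which increases each bag by at most one; hence the resulting decomposition of $G$ has width bounded by a constant $w(t)$, and the lifting subroutine above then refines it to width at most $t$ or reports that $\tw(G)>t$.

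The main obstacle is proving that reducible structures always exist in sufficient quantity in graphs of treewidth at most $t$. This requires a nontrivial structural argument: one must show that any $n$-vertex graph satisfying $\tw(G)\le t$ contains either $\Omega(n)$ edges suitable for the matching contraction (e.g., an $\Omega(n)$-sized matching among vertices of degree bounded in terms of $t$) or some other locally contractible feature. The proof typically proceeds by charging arguments on the bags of any hypothetical width-$t$ decomposition, showing that at least a constant fraction of vertices must have small degree and cluster into contractible pairs; balancing the constants so that the width blow-up caused by uncontraction stays within $w(t)$ is the most delicate part of the analysis.
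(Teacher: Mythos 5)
This is a theorem cited from Bodlaender (SIAM J.\ Comput., 1996); the paper uses it as a black box and gives no proof, so there is nothing in the paper to compare your attempt against. That said, your sketch does track the high-level architecture of Bodlaender's actual algorithm: a bootstrapping recursion with the recurrence $T(n)\le T((1-c)n)+\O(n)$, the Bodlaender--Kloks dynamic program that refines a width-$O(t)$ decomposition to width at most $t$ in linear time, and a matching-contraction step to shrink the graph before recursing.

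However, the reduction step as you describe it has a genuine gap. It is \emph{not} true that every $n$-vertex graph of treewidth at most $t$ contains an $\Omega(n)$-size matching among bounded-degree vertices, nor that the absence of such a matching ``certifies a local structure inconsistent with $\tw(G)\le t$.'' The star $K_{1,n-1}$ has treewidth $1$ and maximum matching size $1$; more generally, graphs with many simplicial or degree-one vertices defeat this claim while having tiny treewidth. Bodlaender's algorithm therefore uses a \emph{two-case} dichotomy: after computing a maximal matching $M$, either $|M|=\Omega(n)$ and one contracts $M$, or $|M|$ is small, in which case one passes to the so-called \emph{improved graph} and shows that a linear fraction of vertices are \emph{I-simplicial}; these are \emph{deleted} (not contracted), with a separate lifting procedure that reinserts them into the recursively computed decomposition. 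Proving that one of these two cases must occur in any graph of treewidth at most $t$, with constants tuned so the recursion shrinks $n$ by a constant factor, is the technical heart of Bodlaender's paper. Your ``charging on bags'' heuristic and the hedge ``or some other locally contractible feature'' do not supply the second case nor the argument that the dichotomy is exhaustive, so the proposal as written does not close.
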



\paragraph{Tree-independence number.}
For a tree decomposition $\cT = (T, \{X_u\}_{u \in V(T)})$, by $\alpha(\cT)$ we denote $\max_{u \in V(T)} \alpha(X_u)$.
By \emph{tree-independence number} of a graph $G$ we denote the minimum value of $\alpha(\cT)$ over all tree decompositions $\cT$ of $G$.

\paragraph{Nice tree decompositions.}
A tree decomposition  $\cT = (T, \{X_u\}_{u \in V(T)})$ rooted at $r$ is \emph{nice} if $X_r = \emptyset$ and every node $u$ of $T$ is of one of the following types:
\begin{description}
    \item[Leaf node.]  $u$ is a leaf of $T$ and $X_u = \emptyset$,
    \item[Introduce node.] $u$ has exactly one child $u'$, and $X_u = X_{u'} \cup \{v\}$ for some $v \notin X_{u'}$, 
    \item[Forget node.] $u$ has exactly one child $u'$, and $X_u = X_{u'} \setminus \{v\}$ for some $v \in X_{u'}$,
    \item[Join node.] $u$ has exactly two children $u',u''$ and $X_u = X_{u'} = X_{u''}$.
\end{description}

It is well-known (and easy to verify) that, given any tree decomposition $\cT$, in polynomial time one can obtain a nice tree decomposition  $\cT'$ whose every bag is a subset of some bag of $\cT$.
In particular, $\textsf{width}(\cT') \leq \textsf{width}(\cT)$ and $\alpha(\cT') \leq \alpha(\cT)$.

\subsection{\cmsotwo }\label{sec:cmso}

We use the standard logic \cmsotwo, which stands for Counting Monadic Second Order logic with edge quantification, on graphs. We now recall the fundamentals and establish the notation; for a broader introduction see e.g.~\cite{DBLP:books/daglib/0030804}.

In \cmsotwo on graphs, there are variables of four different sorts: for single vertices and for single edges (typically denoted with lower case letters), as well as for subsets of vertices and for subsets of edges (typically denoted with upper case letters). Atomic formulas of \cmsotwo are of the following form:
\begin{itemize}
	\item equality of variables of the same sort: $x=y$ or $X=Y$;
	\item testing the incidence relation: $\mathsf{inc}(x,e)$, where $x$ is a vertex variable and $e$ is an edge variable;
	\item set membership: $x\in X$, where $x$ is a (vertex/edge) variable and $X$ is a (vertex/edge) set variable; and
	\item modular arithmetic checks: $|X|\equiv a\bmod m$, where $X$ is a (vertex/edge) set variable and $a,m\in \Z$ are fixed integers with $m>0$.
\end{itemize}
Starting with those atomic formulas, larger formulas can be constructed using standard boolean connectives ($\wedge,\vee,\Rightarrow$), negation, and quantification over variables of all sorts (both universal and existential). This defines the syntax of \cmsotwo. The semantics, i.e., how the satisfaction of a formula in a graph is defined, is as expected. For $p\in \N$, by \cpmsotwo we denote the fragment of \cmsotwo where all the moduli $m$ used in the modular arithmetic checks satisfy~$m\leq p$. The {\em{quantifier rank}} of a \cmsotwo formula $\phi$ is the maximum nesting depth of quantifiers in $\phi$.

In a formula of \cmsotwo, any variable that is not bound by a quantifier is called a {\em{free variable}}. We often distinguish the free variables in parentheses by the formula, e.g. $\phi(x,y)$. A {\em{sentence}} is a formula without free variables. As usual, we write $G\models \phi(\tup a)$ to state that formula $\phi$ holds in a graph $G$ for an evaluation $\tup a$ of the free variables of $\phi$.

We will use the following standard fact asserting that the class of graphs of treewidth at most $t$ is definable in \cmsotwo, for every fixed $t$.

\begin{lemma}\label{lem:twt}
    For every $t$ there is a sentence  $\phi_{\tw <t}$ of \cmsotwo such that for every graph $G$, \[G \models \phi_{\tw <t}\qquad\textrm{if and only if}\qquad \tw(G) <t.\]
\end{lemma}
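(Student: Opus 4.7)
The plan is to combine two classical ingredients: the class of graphs of treewidth less than $t$ is closed under taking minors, and, for any fixed pattern graph $H$, the relation ``$H$ is a minor of $G$'' is definable in \msotwo (and hence in \cmsotwo). By the Robertson--Seymour Graph Minor Theorem, the minor-closed class $\mathcal{T}_t := \{G : \tw(G) < t\}$ is characterized by a finite family $\mathcal{F}_t$ of minimal forbidden minors; thus it suffices to take $\phi_{\tw < t} := \bigwedge_{H \in \mathcal{F}_t} \neg \phi_H$, where $\phi_H$ is an \msotwo sentence asserting that $H$ is a minor of~$G$.

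The core step is the construction of $\phi_H$ for fixed $H$ with vertex set $\{u_1, \ldots, u_h\}$. The sentence existentially quantifies over $h$ vertex subsets $V_1, \ldots, V_h$ of $V(G)$ (the branch sets) and asserts that (a) the $V_i$'s are pairwise disjoint; (b) each $V_i$ is nonempty and $G[V_i]$ is connected; and (c) for every edge $u_iu_j \in E(H)$, some edge of $G$ joins a vertex of $V_i$ with a vertex of $V_j$. Conditions (a) and (c) are immediate using membership atoms together with edge quantification and the incidence predicate $\mathsf{inc}$. Connectivity in (b) is expressible in the standard way: $V_i \neq \emptyset$ and, for every nonempty proper subset $Y \subsetneq V_i$, some edge of $G$ is incident to both a vertex of $Y$ and a vertex of $V_i \setminus Y$. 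Since $h$ and $|E(H)|$ are fixed constants, this yields a single \msotwo sentence $\phi_H$.

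The finite conjunction $\bigwedge_{H \in \mathcal{F}_t} \neg \phi_H$ is then the desired sentence $\phi_{\tw < t}$. The main obstacle is essentially non-existent: the argument is routine bookkeeping on top of Robertson--Seymour and the \msotwo-definability of fixed minor containment. The only subtlety worth flagging is that Robertson--Seymour only yields existence of $\mathcal{F}_t$, not an explicit description, which is fine since the lemma only asserts existence of $\phi_{\tw<t}$; alternatively, one could avoid invoking the minor theorem by directly encoding a chordal completion with clique number at most $t$ via a suitable ordered partition of $V(G)$, but this is notationally heavier and yields the same outcome.
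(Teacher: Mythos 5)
Your proof is correct and follows essentially the same route as the paper: invoke Robertson--Seymour to get a finite set of forbidden minors characterizing $\tw(G) < t$, express fixed minor containment in \cmsotwo via branch sets, and take the conjunction of the negations. Your write-up is a bit more explicit (spelling out the connectivity encoding and noting the non-constructivity of $\mathcal{F}_t$), but the substance matches.
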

\begin{proof}
     By the celebrated result of Robertson and Seymour~\cite{DBLP:journals/jct/RobertsonS04}, for each $t$ there is a \emph{finite} family $\mathcal{F}_t$ of graphs such that, for every $G$ it holds that $\tw(G) <t$ if and only if $G$ does not contain any $F \in \mathcal{F}_t$ as a minor.
     Thus, the sentence $\phi_{\tw <t}$ is a conjunction of sentences asserting that a graph does not contain $F$ as a minor, over all $F \in \mathcal{F}_t$.
     Expressing that a fixed graph $F$ is  a minor of a given graph can be easily done in \cmsotwo.  Such a sentence should say that there exists a vertex set for each vertex of $F$ (i.e., a constant number of sets) so that each of these sets induces a connected graph, the sets are pairwise disjoint, and an edge between a pair of sets exists whenever the two corresponding vertices of $H$ are adjacent.
\end{proof}

\subsection{\cmsotwo and boundaried graphs}

\paragraph{Boundaried graphs.} For a finite set of vertex variables $\tup x$, a graph $G$, and a vertex subset $A\subseteq V(G)$, by $A^{\tup x}$ we denote the set of all injective functions from $\tup x$ to $A$; and $G^{\tup x}$ is a shorthand for $V(G)^{\tup x}$. An element $\tup a\in A^{\tup x}$ is interpreted as an evaluation mapping the variables of $\tup x$ to pairwise different vertices of $A$: each $x\in \tup x$ is evaluated to $\tup a(x)$. The set $\tup a(\tup x)$, that is, the image of $\tup a$, is denoted by $\im \tup a$. An {\em{$\tup x$-boundaried graph}} is a pair $(G,\tup a)$, where $G$ is a graph and $\tup a\in G^{\tup x}$. The {\em{boundary}} of~$(G,\tup a)$ is the image of $\tup a$.

We define the following two operations on boundaried graphs:
\begin{itemize}
	\item Suppose $(G_1,\tup a_1)$ and $(G_2,\tup a_2)$ are an $\tup x_1$-boundaried graph and an $\tup x_2$-boundaried graph, respectively. Then {\em{gluing}} those two boundaried graphs results in the $\tup x$-boundaried graph $(G,\tup a)=(G_1,\tup a_1)\oplus(G_2,\tup a_2)$, where $\tup x=\tup x_1\cup \tup x_2$, constructed as follows. Take the disjoint union of $G_1$ and $G_2$ and for each $x\in \tup x_1\cap \tup x_2$, fuse $\tup a_1(x)$ with $\tup a_2(x)$ into one vertex and declare it to be $\tup a(x)$. For $x\in \tup x_1\setminus \tup x_2$ set $\tup a(x)\coloneqq \tup a_1(x)$ and for $x\in \tup x_2\setminus \tup x_1$ set $\tup a(x)\coloneqq \tup a_2(x)$.
	\item If $(G,\tup a)$ is an $\tup x$-boundaried graph and $x\in \tup x$ is a variable, then $\forget_x(G,\tup a)$ is the $(\tup x\setminus \{x\})$-boundaried graph $(G,\tup a|_{\tup x\setminus \{x\}})$. In other words, the graph stays the same, but we forget  variable $x$ from the domain $\tup x$. 
\end{itemize}

\paragraph{Types.} It is well-known (see e.g.~\cite{DBLP:journals/algorithmica/BoriePT92}) that for any fixed $p,q\in \N$ and a finite set of vertex variables $\tup x$, every formula $\phi(\tup x)$ of \cpmsotwo of quantifier rank at most $q$ can be effectively transformed into an equivalent normalized formula $\phi'(\tup x)$ of \cpmsotwo of the same quantifier rank. Furthermore, the set of the normalized formulas, call it $\Formulae^{\tup x,p,q}$, is finite and computable from $p,q,\tup x$. We define $\msotypes^{\tup x,p,q}$ to be the powerset of $\Formulae^{\tup x,p,q}$, and for an $\tup x$-boundaried graph $(G,\tup a)$, we define its {\em{rank-$(p,q)$ type}} as
\[\msotype^{p,q}(G,\tup a)\coloneqq \{\phi(\tup x)\in \Formulae^{\tup x,p,q}~|~G\models \phi(\tup a)\} \in \msotypes^{\tup x,p,q}.\]
Two $\tup x$-boundaried graphs $(G_1,\tup a_1)$ and $(G_2,\tup a_2)$ are called {\em{$(p,q)$-equivalent}}, denoted $(G_1,\tup a_1)\equiv_{p,q} (G_2,\tup a_2)$, if they have the same rank-$(p,q)$ types, that is, $\msotype^{p,q}(G_1,\tup a_1)=\msotype^{p,q}(G_2,\tup a_2)$.

Note that isomorphic boundaried graphs are $(p,q)$-equivalent for all $p,q\in \N$, where an isomorphism between $\tup x$-boundaried graphs $(G_1,\tup a_1)$ and $(G_2,\tup a_2)$ is defined as an isomorphism between $G_1$ and $G_2$ that maps $\tup a_1(x)$ to $\tup a_2(x)$, for each $x\in \tup x$. Further, for any fixed $p$ and $q$, the rank-$(p,q)$ type of a boundaried graph $(G,\tup a)$ is computable from $(G,\tup a)$. Finally, it is well known that types defined in this are compositional with respect to the operations on boundaried graphs, in the sense that the type of the result of an operation is determined by the type(s) of the input graph(s). This is formalized in the following lemma.

\begin{lemma}[see e.g.~{\cite[Proposition~8]{DBLP:conf/stoc/GartlandLPPR21}}]\label{prop:mso-type}
Fix $p,q\in \N$. Then:
\begin{itemize}
\item For every pair of finite sets of vertex variables $\tup x_1,\tup x_2$ there exists a computable binary operation $\oplus\colon \msotypes^{\tup x_1,p,q}\times \msotypes^{\tup x_2,p,q}\to \msotypes^{\tup x,p,q}$, where $\tup x=\tup x_1\cup \tup x_2$, so that for any $\tup x_1$-boundaried graph $(G_1,\tup a_1)$ and any $\tup x_2$-boundaried graph $(G_2,\tup a_2)$, we have
\[\msotype^{p,q}((G_1,\tup a_1)\oplus (G_2,\tup a_2))=\msotype^{p,q}(G_1,\tup a_1)\oplus \msotype^{p,q}(G_2,\tup a_2).\]
Moreover, the operation $\oplus$ is associative and commutative.
\item For every finite set of vertex variables $\tup x$ and a variable $x\in \tup x$ there exists a computable operation $\forget_{x}\colon \msotypes^{\tup x,p,q}\to \msotypes^{\tup y,p,q}$, where $\tup y\coloneqq \tup x\setminus \{x\}$, so that for any $\tup x$-boundaried graph $(G,\tup a)$, we have
\[\msotype^{p,q}(\forget_x(G,\tup a))=\forget_x(\msotype^{p,q}(G,\tup a)).\]
\end{itemize}
\end{lemma}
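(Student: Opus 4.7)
The plan is to establish the lemma via the classical Feferman--Vaught--Shelah composition method for MSO types, adapted to handle the modular-counting predicates of \cpmsotwo. I phrase the argument in terms of Ehrenfeucht--Fra\"\i ss\'e games, but an equivalent account via normalized formulas works identically. The starting point is that for fixed $p,q\in\N$ and a finite tuple $\tup x$, the set $\Formulae^{\tup x,p,q}$ is finite and computable, hence so is $\msotypes^{\tup x,p,q}$. Equivalently, $(p,q)$-equivalence admits an EF characterization: $(G_1,\tup a_1)\equiv_{p,q}(G_2,\tup a_2)$ iff Duplicator wins the $q$-round MSO EF game with vertex, edge, vertex-set, edge-set, and modular-counting moves of moduli $\leq p$.

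For the gluing operation, the central claim to establish is compositionality: if $(G_i,\tup a_i)\equiv_{p,q}(G_i',\tup a_i')$ for $i\in\{1,2\}$, then $(G_1,\tup a_1)\oplus(G_2,\tup a_2)\equiv_{p,q}(G_1',\tup a_1')\oplus(G_2',\tup a_2')$. Given Duplicator's winning strategies $\sigma_1,\sigma_2$ on the summand pairs, I would assemble a joint strategy on the glued pairs: every Spoiler move decomposes into its $G_1$- and $G_2$-restrictions (vertex and edge sets split, overlapping only on the image $\im\tup a$ of the identified boundary), and Duplicator responds componentwise via $\sigma_1$ and $\sigma_2$. The two responses agree on the common boundary because both strategies respect the boundary assignment. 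Preservation of the atomic formulas equality, incidence, and set membership is local and transfers directly from the summands. Once compositionality holds, the operation $\oplus$ on $\msotypes^{\tup x_1,p,q}\times\msotypes^{\tup x_2,p,q}$ is well defined: pick arbitrary boundaried representatives of the inputs, glue them, and return the resulting type. Computability follows by enumerating a canonical finite set of representatives per type.

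The $\forget_x$ direction is easier, since forgetting a variable changes only the domain of the evaluation. Every formula in $\Formulae^{\tup y,p,q}$ (for $\tup y=\tup x\setminus\{x\}$) is equivalently a formula over $\tup x$ that does not mention $x$, so the type of $\forget_x(G,\tup a)$ is precisely the restriction of $\msotype^{p,q}(G,\tup a)$ to such formulas, and the induced operation on types is trivially computable. Associativity and commutativity of $\oplus$ on types then follow from the corresponding properties of gluing on boundaried graphs up to isomorphism, combined with the fact that isomorphic boundaried graphs share their $(p,q)$-type.

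The main obstacle I anticipate is the modular-counting step in the compositionality argument: without the modular predicates this is the textbook Feferman--Vaught composition, but their inclusion forces Duplicator to keep, in each summand, enough residue information to reconstruct $|X| \bmod m$ on the glued structure from $|X\cap V(G_1)|$, $|X\cap V(G_2)|$, and $|X\cap \im\tup a|$ via inclusion-exclusion. Since $p$ is fixed, the extra information is bounded and the argument goes through, but it must be executed carefully within the definition of the composed strategy, using the bound $m\leq p$ to ensure that all residue data tracked in the game remains finite and independent of the sizes of the graphs involved.
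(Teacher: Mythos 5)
The paper does not prove this lemma: it cites it as folklore, pointing to Proposition~8 of Gartland et al.\ (STOC~2021), with the lead-in sentence ``it is well known that types defined in this way are compositional with respect to the operations on boundaried graphs.'' So there is no in-paper proof to compare against; you are supplying an argument the authors deliberately omitted.

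Your sketch is the standard Feferman--Vaught--Shelah composition argument and is substantively sound. Two places could be tightened. First, the claim that the two componentwise Duplicator responses ``agree on the common boundary because both strategies respect the boundary assignment'' needs to be cashed out: because the boundary vertices are named parameters, any winning strategy for Duplicator in, say, the $(G_1,\tup a_1)$ versus $(G_1',\tup a_1')$ game must preserve the atomic facts $\tup a_1(x)\in X_1$ for every boundary variable $x$ and every set move $X_1$ (otherwise Spoiler wins at the end regardless of subsequent moves); combined with $X_1\cap B = X\cap B = X_2\cap B$ this forces $X_1'$ and $X_2'$ to intersect the boundary identically, so the glue is well defined. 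Second, and more to the point, the EF-game argument establishes that $\oplus$ is \emph{well defined} on realizable types, but deriving \emph{computability} by ``enumerating a canonical finite set of representatives per type'' is a little glib, since you would need an effective bound on the size of a minimal representative to know when to stop, and you would also need to say what $\oplus$ does on non-realizable type pairs. The cleaner route to computability, and the one implicit in the normalized-formula setup of $\Formulae^{\tup x,p,q}$ that the paper adopts, is the syntactic Feferman--Vaught theorem: for each $\phi(\tup x)\in\Formulae^{\tup x,p,q}$ one can effectively compute a boolean combination of formulas in $\Formulae^{\tup x_1,p,q}$ and $\Formulae^{\tup x_2,p,q}$ whose truth in the two summands determines whether $\phi$ holds in the glued graph; the modular predicates enter exactly via the inclusion--exclusion over the boundary you describe, with moduli bounded by $p$. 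With that syntactic formulation, $\oplus$ acts directly on types as purely formal objects and computability is immediate, with no need for representatives. The $\forget_x$ part and the associativity/commutativity observations are fine as you state them.
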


For a sentence $\psi$ of \cpmsotwo of quantifier rank at most $q$ and a finite set of vertex variables~$\tup x$, we let $\msotypes^{\tup x,p,q}[\psi]\subseteq \msotypes^{\tup x,p,q}$ be the set of those rank-$(p,q)$ types that contain the normalized sentence equivalent to $\psi$. Thus, for any $\tup x$-boundaried graph $(G,\tup a)$, we have 
\[G\models \psi\qquad\textrm{if and only if}\qquad \msotype^{p,q}(G,\tup a)\in \msotypes^{\tup x,p,q}[\psi].\]
Note that the set $\msotypes^{\tup x,p,q}[\psi]$ is computable from $p,q,\tup x,\phi$.

Finally, it will be often convenient to speak about types of graphs with a boundary understood as a prescribed subset of vertices, say $B$, without explicitly enumerating the elements of $B$ using an explicit set of variables. To this end, for a graph $G$ and a set of vertices $B\subseteq V(G)$, we write
\[\msotype^{p,q}(G,B)\coloneqq \msotype^{p,q}(G,\id_B),\]
where on the right hand side we treat $B$ as a finite set of variables and $\id_B\in B^B$ is an evaluation that maps each of these variables to itself. Consequently, if $G,G'$ are two graphs and $B\subseteq V(G)\cap V(G')$, then we write $(G,B)\equiv_{p,q}(G',B)$ if and only if  $(G,\id_B)\equiv_{p,q}(G',\id_B)$.

\paragraph*{\cmsotwo on graphs of bounded treewidth.} We will use the classic result of Courcelle about the tractability of \cmsotwo on graphs of bounded treewidth, in the following formulation.

\begin{theorem}[Courcelle~\cite{DBLP:journals/iandc/Courcelle90}]\label{thm:courcelleb}
	Fix a constant $\ell\in \N$, a finite set of vertex variables $\tup x$, and a formula of $\phi(\tup x)$ of \cmsotwo. Then given an $n$-vertex boundaried graph $(G,\tup a)$ of treewidth at most~$\ell$, one can decide whether $\phi(\tup a)$ holds in $G$ in time $\Oh(n)$.
\end{theorem}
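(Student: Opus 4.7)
The plan is to execute the standard bottom-up dynamic programming on a nice tree decomposition, using the compositional type machinery established in \cref{prop:mso-type}. Since $\phi$ is fixed, let $q$ be its quantifier rank and $p$ the largest modulus appearing in any modular arithmetic check of $\phi$; then $\phi$ is a sentence of \cpmsotwo and it suffices to decide whether $\msotype^{p,q}(G,\tup a)\in\msotypes^{\tup x,p,q}[\phi]$.

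First, I would invoke Bodlaender's algorithm (\cref{thm:treewidthfpt}) to compute in time $\Oh(n)$ a tree decomposition of $G$ of width at most $\ell$, convert it to a nice tree decomposition, and then augment every bag by the set $\im\tup a$. Because $|\tup x|$ is fixed, the augmented decomposition still has constant width; moreover, we may arrange that no vertex of $\im\tup a$ is ever forgotten, so the root bag $X_r$ equals exactly $\im\tup a$.

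Next, I would compute, in a bottom-up fashion, for each node $u$ the rank-$(p,q)$ type $\sigma_u$ of the boundaried graph whose underlying graph is $G[V_u]$ and whose variable set is $\tup x\cup(X_u\setminus\im\tup a)$: variables from $\tup x$ are evaluated according to $\tup a$, while each $v\in X_u\setminus\im\tup a$ is treated as its own variable, evaluated to itself. The base case of a leaf (empty bag) is trivial. At an introduce node adding a vertex $v$, the transition amounts to applying $\oplus$ to $\sigma_{u'}$ and the precomputed type of a constant-size graph encoding $v$ and its edges to $X_{u'}$. At a forget node removing a vertex $v$ from the bag, I would apply $\forget_v$; the augmentation guarantees $v\notin\im\tup a$, so this is always permissible. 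At a join node, I would combine the two children's types via $\oplus$. By \cref{prop:mso-type}, each of these transitions is a lookup in a table of constant size (depending only on $\ell$, $|\tup x|$, $p$, and $q$). At the root $r$ we have $V_r=V(G)$ and the variables are exactly $\tup x$ evaluated by $\tup a$, so $\sigma_r=\msotype^{p,q}(G,\tup a)$; the algorithm accepts iff $\sigma_r\in\msotypes^{\tup x,p,q}[\phi]$. The total running time is linear in $n$.

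The only real technical point is the boundary bookkeeping: the type framework insists that the variables of $\tup x$ be tracked throughout the computation and never discarded. Adding $\im\tup a$ to every bag, which is affordable because $|\tup x|$ is constant, handles this cleanly, and the rest of the argument is a routine transcription of the classical Courcelle dynamic programming into the notation of \cref{prop:mso-type}.
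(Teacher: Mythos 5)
The paper does not actually prove this result: it states it as Courcelle's theorem and adds only the remark that, while Courcelle's original formulation assumes a width-$\ell$ tree decomposition is given on input, one can compute such a decomposition in $\Oh(n)$ time by \cref{thm:treewidthfpt}. You instead supply a proof from scratch via the type-composition machinery of \cref{prop:mso-type}, and your argument is correct. Augmenting every bag with $\im\tup a$ is affordable because $|\tup x|$ is a fixed constant, it keeps the width bounded by a constant, it guarantees the boundary vertices are never forgotten, and it makes the root bag exactly $\im\tup a$, so the root type is $\msotype^{p,q}(G,\tup a)$ as required; each transition at an introduce, forget, or join node is a constant-size table lookup justified by $\oplus$ and $\forget$ from \cref{prop:mso-type}. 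One thing worth being explicit about is that at an introduce node the glued piece must be $G[X_u]$ with all of $X_u$ as boundary, which is where the tree-decomposition property that the introduced vertex has no neighbours in $V_{u'}\setminus X_{u'}$ is used. Note also that this is not really an independent route: the compositionality packaged in \cref{prop:mso-type} is precisely the technical heart of Courcelle's theorem, so you are re-deriving the cited black box rather than sidestepping it. Still, the reasoning is sound and consistent with how the paper invokes the result.
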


We note that the original statement of Courcelle assumed that a tree decomposition of width at most $\ell$ is also given on input, but this can be computed in $\Oh(n)$ time using \cref{thm:treewidthfpt}.




From \cref{thm:courcelleb} we may immediately derive that one can efficiently compute the type of a given graph of bounded treewidth.

\begin{corollary}\label{cor:type-computation}
	Fix constants $p,q,\ell\in \N$ and a finite set of vertex variables $\tup x$.
	Then given an $n$-vertex $\tup x$-boundaried graph $(G,\tup a)$ of treewidth at most $\ell$, one can compute $\msotype^{\tup x,p,q}(G,\tup a)$ in time $\Oh(n)$.
\end{corollary}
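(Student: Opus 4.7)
The plan is to reduce the task to a constant number of invocations of Courcelle's theorem (\cref{thm:courcelleb}), one for each normalized formula in the finite set $\Formulae^{\tup x,p,q}$.

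First, I would precompute the set $\Formulae^{\tup x,p,q}$. Since $p$, $q$, and $\tup x$ are fixed, this set is finite and, as recalled in the paragraph preceding \cref{prop:mso-type}, computable from $p,q,\tup x$. In particular, its cardinality is a constant $C=C(p,q,\tup x)$, and this precomputation takes time $\Oh(1)$ with respect to $n$.

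Next, iterating over each formula $\phi(\tup x)\in \Formulae^{\tup x,p,q}$, I would invoke \cref{thm:courcelleb} on the boundaried graph $(G,\tup a)$ to decide whether $G\models \phi(\tup a)$. Since the treewidth of $G$ is bounded by the constant $\ell$, each such call runs in time $\Oh(n)$. I would then output
\[\msotype^{p,q}(G,\tup a) = \{\phi(\tup x)\in \Formulae^{\tup x,p,q}~|~G\models \phi(\tup a)\},\]
which is exactly the definition of the rank-$(p,q)$ type. The total running time is $C\cdot \Oh(n)=\Oh(n)$, as required.

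There is essentially no obstacle in this argument beyond ensuring that each ingredient is indeed at our disposal: (i) \cref{thm:courcelleb} handles boundaried graphs by treating the free variables of $\phi$ as evaluated according to $\tup a$, and (ii) the set of normalized formulas $\Formulae^{\tup x,p,q}$ being finite and explicitly enumerable is precisely what makes the sum over $C$ queries a constant-factor overhead. The only minor point worth stating explicitly is that although the statement of \cref{thm:courcelleb} assumes a tree decomposition of width at most $\ell$ is available implicitly (it is computed in $\Oh(n)$ time via \cref{thm:treewidthfpt} as remarked after \cref{thm:courcelleb}), we compute it once and reuse it across all $C$ calls, which does not affect the asymptotic running time.
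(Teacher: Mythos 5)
Your proof is correct and matches the paper's own argument: apply \cref{thm:courcelleb} once per formula in the finite, precomputable set $\Formulae^{\tup x,p,q}$, and collect the formulas that hold. The additional remarks about reusing the tree decomposition and treating $\tup a$ as the evaluation of free variables are accurate but not strictly needed for the argument.
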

\begin{proof}
	Apply \cref{thm:courcelleb} for every formula $\phi(\tup x)\in \Formulae^{\tup x,p,q}$, noting that the set $\Formulae^{\tup x,p,q}$ is finite and computable from $\tup x,p,q$.
\end{proof}

\section{Signatures}
In what follows we assume that $p,q,t$, and $\ell$ are fixed nonnegative integers. We treat $p,q,t$ as constants, in particular we do not attempt to optimize any dependence on them and hide functions depending on $p,q,t$ in the $\Oh(\cdot)$ notation.
On the other hand, we treat $\ell$ as a parameter and make the dependence on it explicit in the notation; we try to keep this dependence as low as possible.

We fix a set $\tup x$ consisting of at most $\ell$ vertex variables. Consider an $\tup x$-boundaried graph $(G,\tup a)$ of treewidth smaller than $t$. Let us emphasize that while the treewidth of $G$ is considered constant (namely smaller than $t$), the size of the boundary of $(G,\tup a)$ is not, as it depends on~$\ell$.
In this section we show that in such a setting, the number of possible $(p,q)$-types of $(G,\tup a)$ is bounded by a moderate function of $\ell$. The key idea towards bounding this number is to understand the $(p,q)$-type of $(G,\tup a)$ through the lens of the {\em{signature}} of $(G,\tup a)$, defined below.


\paragraph{Defining signatures.} Consider any rooted tree decomposition $\cT = (T, \{X_u\}_{u \in V(T)})$ of $G$ of width smaller than $t$. Without loss of generality assume that $T$ is binary, that is, every node of $T$ has at most $2$ children. Let $B\coloneqq \im \tup a\subseteq V(G)$ be the image of $\tup a$.
For each vertex $v \in B$, mark one, arbitrarily chosen, node of $T$ whose bag contains $v$. Note that we have marked at most $\ell$ nodes of $T$.

Now, let $Q$ be the lowest common ancestor closure of all the marked nodes. By this we mean the set containing all the marked nodes and, for every pair of marked nodes, also their their (unique) lowest common ancestor in $T$.
It is well known (see e.g.~\cite{DBLP:journals/jacm/BodlaenderFLPST16}) that (i) $|Q|<2\ell$, and (ii) every connected component of $T-Q$ is adjacent to at most two different vertices of $Q$. Define \[B'\coloneqq \bigcup_{u \in Q} X_u.\] 
Clearly, we have $|B'| \leq t \cdot 2\ell = \Oh(t\ell)$.

Let $s$ be the number of components of $T-Q$, and let $T_1,\ldots,T_s$ be those components. Since $T$ is binary, we have $s \leq 2\ell$.
For each $i \in [s]$, let $Z_i$ be the set of at most two nodes from $Q$ that are adjacent to $V(T_i)$.
Now, for each $i \in [s]$, we define \[P_i \coloneqq G\left[\bigcup_{u \in V(T_i) \cup Z_i} X_u\right]\qquad\textrm{and}\qquad B_i \coloneqq \bigcup_{u \in Z_i} X_u.\]
Note that we have $|B_i|\leq 2t$.

We fix a set $\Omega$ consisting of $2\ell t$ vertex variables, disjoint from $\tup x$. ($\Omega$ is fixed independent of~$(G,\tup a)$, so that for all the considered boundaried graphs $(G,\tup a)$ we use the same $\Omega$.)
To every vertex $v\in B'\setminus B$, we assign a different variable $y_v\in \Omega$. Let $\tup y\coloneqq \tup x\cup \{y_v\colon v\in B'\setminus B\}\subseteq \tup x\cup \Omega$ and define $\tup b\in G^{\tup y}$ by setting $\tup b(x)=\tup a(x)$ and $\tup b(y_v)=v$ for each $v\in B'\setminus B$. Thus, $(G,\tup b)$ is a $\tup y$-boundaried graph with boundary $B'$. Finally, for each $i\in [s]$, we define $\tup y_i\coloneqq \tup b^{-1}(B_i)$ and $\tup b_i\coloneqq \tup b|_{\tup y_i}$. Thus, $(P_i,\tup b_i)$ is a $\tup y_i$-boundaried graph with boundary $B_i$ of size at most $2t$.

Now comes the crucial definition. Given the objects described above, we define a \emph{signature} of $(G,\tup a)$ as a tuple consisting of:
\begin{itemize}
	\item the number $s$;
	\item the variable set $\tup y\subseteq \tup x\cup \Omega$ and its subsets $\tup y_1,\ldots,\tup y_s$;
	\item the graph $H$ on the vertex set $\tup y$ where $y,y'\in \tup y$ are adjacent if and only if $\tup b(y)$ and $\tup b(y')$ are adjacent in $G$; and
	\item for each $i\in [s]$, the type $\tau_i\coloneqq \msotype^{p,q}(P_i,\tup b_i)$.
\end{itemize}
Note that a single boundaried graph $(G,\tup a)$ may have multiple signatures, depending on the choice of the tree decomposition $\cT$, the marked nodes, the enumeration of the connected components of $T-Q$, etc. However, some signature of $(G,\tup a)$ can be computed efficiently.

\begin{lemma}\label{lem:compute-signature}
	For fixed $p,q,t\in \N$ there is an algorithm that given a $\tup x$-boundaried graph $(G,\tup a)$, some signature of $(G,\tup a)$ can be computed in polynomial time.
\end{lemma}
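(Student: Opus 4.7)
The plan is to carry out all the constructions defining a signature step by step, verifying that each one is effectively implementable and that each $(P_i,\tup b_i)$ is simple enough for the rank-$(p,q)$ type to be computed via \cref{cor:type-computation}.

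First, I would obtain the underlying tree decomposition. Since the signature is defined only for boundaried graphs $(G,\tup a)$ of treewidth smaller than $t$, I invoke \cref{thm:treewidthfpt} on $G$ to compute, in time $\Oh(n)$, a tree decomposition $\cT=(T,\{X_u\}_{u\in V(T)})$ of width at most $t-1$. A standard transformation (subdividing nodes with more than two children) converts $\cT$ in linear time into a binary rooted tree decomposition of no larger width. Next, for each $v\in B=\im\tup a$ I pick any node $u$ with $v\in X_u$ (these are easy to find by scanning the bags) and mark it; there are at most $\ell$ marked nodes. The lowest common ancestor closure $Q$ of the marked set can be computed in polynomial time by well-known algorithms for LCA and by repeatedly inserting LCAs of marked pairs; this yields $|Q|<2\ell$. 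Setting $B'\coloneqq\bigcup_{u\in Q}X_u$ is then immediate, and so is the assignment $v\mapsto y_v\in\Omega$ for $v\in B'\setminus B$, which defines $\tup y$ and $\tup b$.

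Second, I enumerate the connected components $T_1,\ldots,T_s$ of $T-Q$ and, for each $i$, identify the (at most two) nodes $Z_i\subseteq Q$ neighboring $V(T_i)$; all of this is straightforward graph traversal. From $Z_i$ I compute $B_i$, $\tup y_i\coloneqq\tup b^{-1}(B_i)$, $\tup b_i\coloneqq \tup b|_{\tup y_i}$, and the induced subgraph $P_i$. The auxiliary graph $H$ on vertex set $\tup y$ is built by checking, for each pair $y,y'\in\tup y$, whether $\tup b(y)\tup b(y')\in E(G)$; this takes time $\Oh(|\tup y|^2)=\Oh(\ell^2 t^2)$.

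The main step is computing the types $\tau_i=\msotype^{p,q}(P_i,\tup b_i)$. The key observation is that $P_i$ inherits from $\cT$ a tree decomposition of width smaller than $t$ (just take the bags indexed by $V(T_i)\cup Z_i$), and that $|\tup y_i|\leq 2t$ is a constant depending only on $t$. Hence $(P_i,\tup b_i)$ fits into the setting of \cref{cor:type-computation}, which computes $\tau_i$ in time $\Oh(|V(P_i)|)$; summed over all $i\in[s]$ this is $\Oh(n)$ because the subgraphs $P_i$ cover $V(G)\setminus B'$ disjointly (with the boundary parts contributing only constantly often). Combining the tuple $(s,\tup y,\tup y_1,\ldots,\tup y_s,H,\tau_1,\ldots,\tau_s)$ yields the desired signature.

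The only non-routine point is verifying that \cref{cor:type-computation} can be applied even though the set of boundary variables $\tup y_i$ is not fixed in advance: this is handled by noting that there are only finitely many candidate subsets of size $\leq 2t$ of any constant-sized variable pool, so the algorithm of \cref{cor:type-computation} can be applied with the appropriate variable set for each $i$; each individual application runs in linear time with constants depending only on $p,q,t$. All remaining operations are transparent polynomial-time graph manipulations, giving a polynomial-time procedure overall.
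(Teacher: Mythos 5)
Your proposal is correct and follows essentially the same route as the paper's proof: compute a width-$(t-1)$ tree decomposition via \cref{thm:treewidthfpt}, construct the objects of the signature directly from the definition, and compute each $\tau_i$ via \cref{cor:type-computation} after observing that $P_i$ has treewidth below $t$ and $|\tup y_i|\leq 2t$. The only difference is that you spell out the routine bookkeeping (binarization, LCA closure, component enumeration) that the paper leaves implicit, which does not change the argument.
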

\begin{proof}
	We start by calling \cref{thm:treewidthfpt} to compute a tree decomposition $\cT$ of $G$ of width less than $t$.
	Computing $B'$, variable sets $\tup y, \tup y_1,\ldots,\tup y_s$, evaluations $\tup b,\tup b_1,\ldots,\tup b_2$, and boundaried graphs $(P_i,\tup b_i)$, for all $i\in [s]$, can be done in polynomial time right from the definition.
	Since each $P_i$ is an induced subgraph of $G$, the treewidth of $P_i$ is smaller than $t$ as well. Therefore, we may use \cref{cor:type-computation} to compute each type $\tau_i=\msotype^{p,q}(P_i,\tup b_i)$ in time linear in $|V(P_i)|$.    
\end{proof}

Next, we bound the number of possible distinct  signatures.

\begin{lemma}\label{lem:number-signature}
	There are $\ell^{\Oh(\ell)}$ possible distinct signatures of $\tup x$-boundaried graphs of treewidth less than $t$. 
\end{lemma}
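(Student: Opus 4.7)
The plan is to count separately the choices for each of the five components of a signature, using crucially that $p,q,t$ are constants and $|\tup x|\le \ell$. Recall that $|\tup y| \le |\tup x|+|\Omega| \le \ell + 2\ell t = \Oh(\ell)$.

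First I would count the ``combinatorial'' components. The number $s$ ranges over $\{0,1,\ldots,2\ell\}$ so it contributes a factor of $\Oh(\ell)$. The variable set $\tup y$ is a subset of the fixed set $\tup x \cup \Omega$ of size $\Oh(\ell)$, so there are $2^{\Oh(\ell)}$ choices for $\tup y$. Once $\tup y$ is fixed, each $\tup y_i$ is a subset of $\tup y$ of size at most $2t=\Oh(1)$; hence there are at most $|\tup y|^{\Oh(1)} = \ell^{\Oh(1)}$ choices per index $i$, and since $s \le 2\ell$, the full tuple $(\tup y_1,\ldots,\tup y_s)$ contributes $\ell^{\Oh(\ell)}$. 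For the types $\tau_i$: since $|\tup y_i|\le 2t$ is bounded by a constant and $p,q$ are fixed, the set $\msotypes^{\tup y_i,p,q}$ has constant size, so each $\tau_i$ contributes $\Oh(1)$ choices and the whole tuple $(\tau_1,\ldots,\tau_s)$ contributes $2^{\Oh(\ell)}$.

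The step I expect to be the main obstacle is bounding the number of possible graphs $H$ on the vertex set $\tup y$. A~priori, $H$ is an arbitrary graph on $|\tup y|=\Theta(\ell)$ vertices, which would give $2^{\Theta(\ell^2)}$ options; this is too large for the target bound $\ell^{\Oh(\ell)}$. The key observation is that by construction $H$ is isomorphic (via $\tup b$) to the induced subgraph $G[B']$, and $G$ has treewidth less than $t$, so $H$ has treewidth less than $t$ as well. In particular, $H$ is $t$-degenerate, so $|E(H)|\le t \cdot |\tup y| = \Oh(\ell)$. The number of labeled graphs on $n=\Oh(\ell)$ vertices with at most $cn$ edges is at most
\[\sum_{m=0}^{cn}\binom{\binom{n}{2}}{m}\ \le\ (cn+1)\cdot \binom{n^2}{cn}\ \le\ n^{\Oh(n)}\ =\ \ell^{\Oh(\ell)},\]
where the middle inequality uses $\binom{N}{k} \le N^k$ and the next step applies Stirling to $(cn)!$.

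Finally, I would multiply together the contributions of all five components:
\[\Oh(\ell)\cdot 2^{\Oh(\ell)}\cdot \ell^{\Oh(\ell)}\cdot \ell^{\Oh(\ell)}\cdot 2^{\Oh(\ell)}\ =\ \ell^{\Oh(\ell)},\]
which is exactly the claimed bound.
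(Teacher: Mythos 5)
Your proposal is correct and follows essentially the same route as the paper: you count choices for each of the five signature components, and the crucial step in both cases is observing that $H$ inherits treewidth $<t$ from $G$, hence has only $\Oh(\ell)$ edges, which brings the count of candidate graphs on $\Oh(\ell)$ vertices down from $2^{\Theta(\ell^2)}$ to $\ell^{\Oh(\ell)}$. The only difference is that you spell out the binomial/Stirling estimate for counting sparse labeled graphs, which the paper leaves implicit.
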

\begin{proof}
	We bound the number of ways of choosing the consecutive components of a signature:
	\begin{itemize}
		\item There are at most $2\ell+1$ ways to choose $s$.
		\item As $|\Omega|=2\ell t=\Oh(\ell)$, there are $2^{\Oh(\ell)}$ ways to choose $\tup y$.
		\item As each set $\tup y_i$ has size at most $2t$, there are at most $(|\Omega|^{2t})^s\leq \ell^{\Oh(\ell)}$ ways to choose~$\tup y_1,\ldots,\tup y_s$.
		\item Note that $H$ is isomorphic to an induced subgraph of $G$, hence its treewidth is smaller than $t$. It is well-known that a graph of treewidth less than $t$ has fewer than $t\ell=\Oh(\ell)$ edges. As $H$ has $\Oh(\ell)$ vertices, there are $\ell^{\Oh(\ell)}$ ways to choose $H$.
		\item For each $i\in [s]$, the number of ways to choose the type $\tau_i$ is bounded by $|\msotypes^{\tup y_i,p,q}|$, which is $\Oh(1)$ due to $|\tup y_i|\leq 2t$.
	\end{itemize}
	The product of the upper bounds presented above is $\ell^{\Oh(\ell)}$, and from the description it is straightforward to enumerate $\ell^{\Oh(\ell)}$ candidates for signatures.
\end{proof}


Finally, we prove the crucial property: equality of signatures implies equality of types.

\begin{lemma}\label{lem:equiv-signature}
	Let $(G_1,\tup a_1)$ and $(G_2,\tup a_2)$ be two $\tup x$-boundaried graphs 
	of treewidth less than $t$. Suppose there is a signature $\sigma$ that is both a signature of $(G_1,\tup a_1)$ and a signature of $(G_2,\tup a_2)$.
    Then $(G_1,\tup a_1) \equiv_{p,q} (G_2,\tup a_2)$.
\end{lemma}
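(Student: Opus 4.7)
The plan is, for each $j\in\{1,2\}$, to fix data witnessing that $\sigma$ is a signature of $(G_j,\tup a_j)$, and then to express the extended boundaried graph $(G_j,\tup b_j)$ as a gluing of pieces whose $(p,q)$-types are fully determined by $\sigma$. The desired equivalence follows by compositionality of $(p,q)$-types under $\oplus$ and $\forget$ (\cref{prop:mso-type}), together with a final sequence of $\forget$ operations that removes the extra variables in $\tup y\setminus \tup x$.

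Concretely, for $j\in\{1,2\}$, fix objects $\cT_j$, $Q_j$, $T_1^j,\ldots,T_s^j$, $B_j'$, $\tup b_j$, and $(P_i^j,\tup b_i^j)$ as in the definition of signature, chosen so that the resulting signature equals $\sigma$. Let $K_j\coloneqq G_j[B_j']$ and view $(K_j,\tup b_j)$ as a $\tup y$-boundaried graph. Since $\tup b_j$ is a bijection between $\tup y$ and $B_j'$ and, by the definition of $H$, an edge $yy'$ of $H$ exists exactly when $\tup b_j(y)\tup b_j(y')$ is an edge of $G_j$, the map $y\mapsto \tup b_j(y)$ is an isomorphism of $\tup y$-boundaried graphs between $(H,\id_{\tup y})$ and $(K_j,\tup b_j)$. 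Consequently $\msotype^{p,q}(K_1,\tup b_1)=\msotype^{p,q}(K_2,\tup b_2)$.

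The core step is the decomposition
\[
(G_j,\tup b_j) \;=\; (K_j,\tup b_j)\,\oplus\,(P_1^j,\tup b_1^j)\,\oplus\,\cdots\,\oplus\,(P_s^j,\tup b_s^j).
\]
To verify it one checks, using properties of tree decompositions, that (a) every vertex outside $B_j'$ lies in bags belonging to exactly one component $T_i^j$, so it becomes a ``private'' vertex of the corresponding $P_i^j$, while vertices of $B_j'$ are accounted for by the shared variables in $\tup y$; and (b) every edge of $G_j$ lies in some bag $X_u$, which is either in $Q_j$ (the edge is then captured by $K_j$) or in precisely one $T_i^j$ (and then captured by $P_i^j$). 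Applying the $\oplus$-part of \cref{prop:mso-type} repeatedly yields
\[
\msotype^{p,q}(G_j,\tup b_j)=\msotype^{p,q}(K_j,\tup b_j)\oplus \tau_1\oplus\cdots\oplus\tau_s,
\]
and since every ingredient on the right side agrees for $j=1$ and $j=2$, we obtain $\msotype^{p,q}(G_1,\tup b_1)=\msotype^{p,q}(G_2,\tup b_2)$. Finally, $(G_j,\tup a_j)$ is obtained from $(G_j,\tup b_j)$ by forgetting the variables in $\tup y\setminus\tup x$, which is the same finite set of variables for both $j$; a final application of the $\forget$-part of \cref{prop:mso-type} gives $\msotype^{p,q}(G_1,\tup a_1)=\msotype^{p,q}(G_2,\tup a_2)$, i.e.\ $(G_1,\tup a_1)\equiv_{p,q}(G_2,\tup a_2)$.

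The main subtlety is checking the decomposition display carefully: the operation $\oplus$ identifies vertices that carry equal variables, so one must argue that $V(K_j)$ and the various $V(P_i^j)$ overlap \emph{only} where the shared variables dictate — namely, $V(K_j)\cap V(P_i^j)=B_i^j$ corresponds exactly to $\tup y_i$, and $V(P_i^j)\setminus B_i^j$ and $V(P_{i'}^j)\setminus B_{i'}^j$ are disjoint for $i\neq i'$. Both facts are routine consequences of the connectedness property of tree decompositions applied to the LCA-closure $Q_j$; once this bookkeeping is done, the rest is a mechanical application of \cref{prop:mso-type}.
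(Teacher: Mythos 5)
Your proof is correct and follows essentially the same route as the paper's: both express $(G_j,\tup a_j)$ as the result of gluing a copy of $(H,\id_{\tup y})$ with $\tup y_i$-boundaried pieces of types $\tau_i$ and then forgetting $\tup y\setminus\tup x$, after which \cref{prop:mso-type} finishes the argument. You merely spell out more explicitly the isomorphism $(K_j,\tup b_j)\cong(H,\id_{\tup y})$ and the bookkeeping showing that the pieces $K_j, P_1^j,\ldots,P_s^j$ overlap exactly on the prescribed boundary vertices, details the paper leaves implicit in the phrase ``a graph isomorphic to $(G_1,\tup a_1)$ can be obtained from $(H,\id_{\tup y})$ by iteratively gluing\ldots''.
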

\begin{proof}
	Let $\sigma=(s,\tup y,\{\tup y_i\colon i\in [s]]\},H,\{\tau_i\colon i\in [s]\})$. We note that a graph isomorphic to $(G_1,\tup a_1)$ can be obtained from $(H,\id_{\tup y})$, by iteratively gluing a $\tup y_i$-boundaried graph of rank-$(p,q)$ type $\tau_i$, for each $i\in [s]$, and then iteratively applying the $\forget_y$ operation for all the variables $y\in \tup y\setminus \tup x$, in any order. And similarly, $(G_2,\tup a_2)$ can be also obtained in this way. By \cref{prop:mso-type}, the rank-$(p,q)$ type of the result of such an operation is uniquely determined by the types $\{\tau_i\colon i\in [s]\}$ and the type $\msotype^{\tup y,p,q}(H,\id_{\tup y})$. It follows that $(G_1,\tup a_1)$ and $(G_2,\tup a_2)$ have the same rank-$(p,q)$~types.
\end{proof}


Some remarks are in place.
Recall that $|\tup x|$ can be as large as $\ell$, which is not a constant. 
Consequently, the number of formulae in $\msotypes^{\tup x,p,q}$ can be very large (essentially, a $q$-fold exponential function of $\ell$) and in particular, we cannot efficiently compute $\msotype^{\tup x,p,q}(G,\tup a)$, for a given $\tup x$-boundaried graph $(G,\tup a)$. However, if we additionally know that the treewidth of $G$ is bounded by a constant, then using \cref{lem:compute-signature} we may efficiently compute a signature of $(G,\tup a)$, for which, by \cref{lem:number-signature}, there are only $\ell^{\Oh(\ell)}$ many possibilities. And by \cref{lem:equiv-signature}, if for two $\tup x$-boundaried graphs we compute the same signature, then they must have the same rank-$(p,q)$ type. Note that the reverse implication is not necessarily true: it is possible that for two $(p,q)$-equivalent graphs we compute two different signatures.



We now use the above observations to prove the following lemma, which encapsulates the outcome of this section in a single statement. In essence, it says that given a large family of boundaried graphs, we can efficiently compress it by still keeping at least one member of each equivalence class of $(p,q)$-equivalence.

\begin{lemma}\label{lem:compress}
    Let $t,p,q\in \N$ be fixed constants and let $\ell$ be an integer parameter. 
    Let $G$ be an $n$-vertex graph with vertex weights $\wei \colon V(G) \to \Q_+$, and $B\subseteq V(G)$ be a set of at most $\ell$ vertices of $G$.
    Further, let $\mathcal{F}$ be a family of subsets of $V(G)$ such that for every $F \in \mathcal{F}$, it holds that $B \subseteq F$ and $\tw(G[F]) <t$.
    Then in time $(|\mathcal{F}| \cdot n)^{\Oh(1)}$ one can compute a family $\mathcal{F'} \subseteq \mathcal{F}$ of size $\ell^{\Oh(\ell)}$ with the following promise: for every $F \in \mathcal{F}$ there exists $F' \in \mathcal{F'}$ such that \[(G[F],B) \equiv_{p,q} (G[F'],B)\qquad\textrm{and}\qquad \wei(F') \geq \wei(F).\]
\end{lemma}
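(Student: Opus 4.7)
The plan is to bucket the members of $\mathcal{F}$ according to the signatures of the associated boundaried graphs, and from each bucket keep only one set of maximum weight. Since the number of possible signatures of $\tup x$-boundaried graphs of treewidth less than $t$ with $|\tup x|\leq \ell$ is already bounded by $\ell^{\Oh(\ell)}$ via \cref{lem:number-signature}, this will immediately yield the desired size bound.

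Concretely, I set $\tup x \coloneqq B$ (we treat the vertices of $B$ as distinct vertex variables). For each $F\in \mathcal{F}$, I form the $\tup x$-boundaried graph $(G[F],\id_B)$; note that $B\subseteq F$ ensures this is well-defined, and $\tw(G[F])<t$ by assumption. Using \cref{lem:compute-signature}, I compute some signature $\sigma_F$ of $(G[F],\id_B)$ in time polynomial in $n$. Then I group the elements of $\mathcal{F}$ into buckets according to equality of signatures, and from each bucket I select an arbitrary maximizer of $\wei(\cdot)$. Let $\mathcal{F}'$ be the resulting subfamily.

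Correctness is immediate from the earlier machinery. By \cref{lem:number-signature}, the number of buckets is at most $\ell^{\Oh(\ell)}$, hence $|\mathcal{F}'|\leq \ell^{\Oh(\ell)}$. Given $F\in \mathcal{F}$, let $F'\in \mathcal{F}'$ be the chosen representative of its bucket. By construction $\wei(F')\geq \wei(F)$. Moreover, $\sigma_F=\sigma_{F'}$ is simultaneously a signature of $(G[F],\id_B)$ and of $(G[F'],\id_B)$, so \cref{lem:equiv-signature} gives $(G[F],\id_B)\equiv_{p,q}(G[F'],\id_B)$, which by our notational convention is exactly $(G[F],B)\equiv_{p,q}(G[F'],B)$.

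For running time, note that each signature is a tuple of bitsize $\Oh(\ell)$ (since the associated variable set sits inside $\tup x \cup \Omega$ of size $\Oh(\ell)$, the graph $H$ has $\Oh(\ell)$ vertices and edges, and each individual type $\tau_i$ has constant size because $|\tup y_i|\leq 2t$ and $p,q,t$ are fixed). Thus computing a signature takes polynomial time per $F$, and standard sorting (lexicographic on an encoding of signatures) partitions $\mathcal{F}$ into buckets and picks weight-maximizers in $(|\mathcal{F}|\cdot n)^{\Oh(1)}$ time overall. I do not foresee a substantial obstacle here: once the signature framework of the preceding subsection is in place, the lemma is essentially a direct bookkeeping corollary, and the only thing that needs a moment of care is verifying that signatures can be represented and compared in polynomial time, which follows from the compactness estimate used in the proof of \cref{lem:number-signature}.
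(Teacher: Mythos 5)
Your proof is correct and takes essentially the same route as the paper: compute a signature of $(G[F],B)$ for each $F\in\mathcal{F}$ via \cref{lem:compute-signature}, bucket by signature, keep a maximum-weight representative per bucket, and invoke \cref{lem:number-signature} for the size bound and \cref{lem:equiv-signature} for $(p,q)$-equivalence. One small quibble in your running-time remarks: the bitsize of a signature is $\Oh(\ell\log\ell)$ rather than $\Oh(\ell)$ (e.g., writing down $H$ on $\Oh(\ell)$ vertices with $\Oh(\ell)$ edges, or the subsets $\tup y_1,\dots,\tup y_s$, already costs $\Oh(\ell\log\ell)$ bits), but this does not affect the polynomial-time conclusion.
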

\begin{proof}
    For every $F \in \mathcal{F}$, we compute a signature $\sigma_F$ of $(G[F],B)$ using \cref{lem:compute-signature}. Let $\Sigma\coloneqq \{\sigma_F\colon F\in \mathcal{F}\}$ be the set of all the obtained signatures. By \cref{lem:number-signature}, we must have $|\Sigma|\leq \ell^{\Oh(\ell)}$. For each $\sigma\in \Sigma$ we select a set $F_\sigma\in \mathcal{F}$ that has the largest weight among the sets $F\in \cal F$ with $\sigma_F=\sigma$. We define $\mathcal{F'}\coloneqq \{F_\sigma\colon \sigma\in \Sigma\}$; thus $|\mathcal{F'}|=|\Sigma|\leq \ell^{\Oh(\ell)}$. To see the that $\mathcal{F'}$ satisfies the promised property, observe that for each $F\in \mathcal{F}$, the set $F'\coloneqq F_{\sigma_F}$ belongs to~$\mathcal{F'}$, satisfies $\sigma_{F'}=\sigma_F$, which implies that $(G[F],B)\equiv_{p,q} (G[F'],B)$ by \cref{lem:equiv-signature}, and satisfies $\wei(F')\geq \wei(F)$ by the choice of $F'$.   
\end{proof}


\section{The algorithm}

In this section we prove \cref{thm:cmsotech}.

Let $\psi$ be a fixed \cmsotwo formula and $t$ be a constant.
Define \[\psi'\coloneqq \psi\land \phi_{\tw <t},\] where $\phi_{\tw <t}$ is the sentence provided by \cref{lem:twt}.
In other words, $\psi'$ verifies that the graph in question both satisfies $\psi$ and is of treewidth less than $t$.
Let $q$ be the quantifier rank of~$\psi'$ and let $p$ be the largest modulus used in any modular counting predicate used in $\psi'$.
Note that~$\psi'$, and thus also $q$ and $p$, are determined by $\psi$ and $t$; that is, they are fixed constants.

Let $G$ be a graph and let $\wei \colon V(G) \to \Q_+$ be a weight function on the vertices of $G$. Our goal is to find $F \subseteq V(G)$ such that $G[F] \models \psi'$ and $F$ is of maximum possible weight, or decide that no such $F$ exists.
Assume that $G$ is given with its tree decomposition $\cT = (T,\{X_u\}_{u \in V(T)})$ such that the tree-independence number of $\cT$ is at most $k$. Without loss of generality we may assume that $\cT$ is a nice tree decomposition.

For a node $u \in V(T)$ and a subset $B\subseteq X_u$,
we say that $F \subseteq V_u$ is \emph{feasible} for $(u,B)$
if $F \cap X_u = B$ and $\tw(G[F]) <t$.
For two sets $F,F' \subseteq V(G)$ that are feasible for $(u,B)$, we say that $F$ is a \emph{$B$-representative} for $F'$ (or that $F$ \emph{$B$-represents} $F'$) if \[(G[F],B)\equiv_{p,q}(G[F'],B)\qquad\textrm{and}\qquad \wei(F) \geq \wei(F').\]

Fix
\[\ell\coloneqq kt=\Oh(k).\]
For every node $u \in V(T)$ and every set $B \subseteq X_u$ of size at most $\ell$, we will compute a family $\mathsf{F}[u,B]$ of subsets of $V_u$ with the following properties:
\begin{enumerate}
    \item[(F1)] $|\mathsf{F}[u,B]| = \ell^{\Oh(\ell)} = k^{\Oh(k)}$;
    \item[(F2)] each $F \in \mathsf{F}[t,B]$ is feasible for $(u,B)$; and    
    \item[(F3)] for each set $F'$ that is feasible for $(u,B)$, there exists $F \in \mathsf{F}[u,B]$ that $B$-represents $F'$.
\end{enumerate}

Let us present the intuition behind this family.
Let $F_{\textsf{opt}}$ be an (unknown) optimum solution -- assuming that it exists.
Note that $\cT$, restricted to $F_{\textsf{opt}}$, is a tree decomposition of $G[F_{\textsf{opt}}]$.
As $\tw(G[F_{\textsf{opt}}]) <t$ implies in particular that $\chi(G[F_{\textsf{opt}}]) <t$, we conclude that each bag of $\cT$ contains at most $\ell = kt$ vertices from $F_{\textsf{opt}}$.
For each node $u \in V(T)$, we consider every possible intersection $B$ of $F_{\textsf{opt}}$ with $X_u$, and we aim to understand all possible ``behaviors'' of $G[F_{\textsf{opt}} \cap V_u]$ with $B$ as the boundary. These ``behaviors'' are formalized with rank-$(p,q)$ types.
As any two partial solutions that are equivalent satisfy the same formulae, it is sufficient to keep one -- the one with maximum weight.

\subsection{Obtaining the solution}
Suppose that we have computed the families $\mathsf{F}[\cdot,\cdot]$ as specified above.
Let $r$ be the root of $\cT$, that is, $V_r = V(G)$.
Since $\cT$ is a nice tree decomposition, we have $X_r = \emptyset$.

This means that $\mathcal{F} = \mathsf{F}[r,\emptyset]$ contains a maximum-weight representative of each equivalence class (and possibly some more sets, as distinct sets in $\mathcal{F}$ might still be equivalent). We are still left with deciding which of these sets (or, more precisely, which of subgraphs induced by one of these sets) satisfy $\psi'$.

This is, however, easy, as for each $F \in \mathcal{F}$ we have $\tw(G[F]) <t$. Consequently, we can apply \cref{thm:treewidthfpt} and then \cref{thm:courcelleb} to $G[F]$, for each $F \in \mathcal{F}$.
If such $G[F] \models \psi'$, we mark $F$ as a candidate, otherwise we discard it.

Finally, if any of the sets $F \in \mathcal{F}$ was marked as a candidate, we return one with the largest weight as a solution.
Otherwise, we report that no solution exists.
The total running time of this step is $\ell^{\Oh(\ell)} \cdot \Oh(n) = k^{\Oh(k)} \cdot \Oh(n)$.

The correctness of the procedure is straightforward.
On one hand, we never return a set that does not satisfy $\psi'$.
On the other hand, let $F_\textsf{opt}$ be an optimum solution.
By~(F3), there is some $F \in \mathcal{F}$ that has at least the same weight as $F_\textsf{opt}$ and is $(p,q)$-equivalent to $F_\textsf{opt}$ (with empty boundary).
In particular, $F$ also satisfies $\psi'$ so is also an optimum solution that is marked as a candidate in the procedure above.

\subsection{Computing families $\textsf{F}[\cdot,\cdot]$}

We now discuss how the families $\textsf{F}[\cdot,\cdot]$ can be computed by a bottom-up dynamic programming over $\cal T$.
The computation depends on the type of a node.
\paragraph{Leaf node.} If $u \in V(T)$ is a leaf node, we have $X_u = \emptyset$ and thus $V_u = \emptyset$.
Consequently, we set $\textsf{F}[u,\emptyset] = \{ \emptyset \}$.

\paragraph{Introduce node.} Let $u$ be an introduce node, $u'$ be its unique child in $T$, and let $\{v\} = X_u \setminus X_{u'}$.
Suppose we have computed all the suitable families $\textsf{F}[u',\cdot]$.
Now consider some $B \subseteq X_u$ of size at most $\ell$.

If $v \notin B$, we set $\textsf{F}[u,B] \coloneqq \textsf{F}[u',B]$.
Suppose now that $v \in B$.
For every $F \in \textsf{F}[u',B \setminus \{v\}]$ we proceed as follows.
We check whether $\tw(G[F \cup \{v\}]) <t$ using the algorithm of \cref{thm:treewidthfpt}.
If so, we include $F \cup \{v\}$ into $\textsf{F}[u,B]$, and otherwise we do nothing for this particular $F$.

\paragraph{Forget node.}
Let $u$ be a forget node, $u'$ be its unique child in $T$, and let $\{v\} = X_u' \setminus X_{u}$.
Suppose we have computed all the suitable families $\textsf{F}[u',\cdot]$.
Now consider some $B \subseteq X_u$ of size at most $\ell$.
We set  $\textsf{F}[u,B]$ to be the family obtained by applying \cref{lem:compress} to $\textsf{F}[u',B] \cup \textsf{F}[u', B \cup \{v\}]$.

\paragraph{Join node.}
Let $u$ be a join node and let $u',u''$ be its two children in $T$.
Suppose we have computed all the suitable families $\textsf{F}[u',\cdot]$ and $\textsf{F}[u'',\cdot]$.
Consider some $B \subseteq X_u$ of size at most $\ell$.

We set  $\textsf{F}[u,B]$ to be the family obtained by applying \cref{lem:compress} to the family 
\[\{F' \cup F'' ~|~ F' \in \textsf{F}[u',B], F'' \in \textsf{F}[u', B], \text{ and } \tw(G[F' \cup F'')] <t\}.\]

\subsection{Correctness}

We will argue about the correctness of each type of a node separately.

\paragraph{Leaf node.} If $u$ is a leaf node, $\emptyset$ is the only subset of $V_u$, so the correctness is straightforward.

\paragraph{Introduce node.} Consider an introduce node $u$,
let $u'$ be its only child and let $v$ be the unique vertex in $X_u \setminus X_{u'}$. Assume that the families $\textsf{F}[u',\cdot]$ are computed correctly.

If $v \notin B$, the correctness is straightforward.
Suppose now that $v \in B$.
As $|\textsf{F}[u,B]| \leq |\textsf{F}[u',B \setminus \{v\}]|$, condition (F1) clearly holds.
Condition (F2) holds too, as in $\textsf{F}[u,B]$ we include only sets feasible for $(u,B)$.
For (F3), consider any $\widetilde{F} \subseteq V_u$, feasible for $(u,B)$.
Note that $\widetilde{F}' \coloneqq \widetilde{F}  \setminus \{v\}$ is feasible for $(u',B \setminus \{v\})$.
Thus, there exists $F' \in \textsf{F}[u',B \setminus \{v\}]$,
such that the rank-$(p,q)$ types of $(G[\widetilde{F}'],B \setminus \{v\})$ and $(G[F'],B \setminus \{v\})$ are equal and $\wei(F') \geq \wei(\widetilde{F}')$. 

We observe that the neighborhood of $v$ in $V_u$ is contained in $X_u$. Since both $\widetilde{F}'$ and $F'$ are feasible for $(u',B\setminus \{v\})$, we in particular have 
\[N(v)\cap F'=N(v)\cap \widetilde{F}'= N(v)\cap (B \setminus \{v\}).\]
Thus, by \cref{prop:mso-type} we have
\begin{align*}
    \msotype^{p,q}(G[F' \cup \{v\}],B) = & \msotype^{p,q}(G[F'],B \setminus \{v\}) \oplus \msotype^{p,q}(G[B],B)\\
    \msotype^{p,q}(G[\widetilde{F}' \cup \{v\}],B) = & \msotype^{p,q}(G[\widetilde{F}'],B \setminus \{v\}) \oplus \msotype^{p,q}(G[B],B),
\end{align*}
so from the equivalence $(G[\widetilde{F}'],B \setminus \{v\})\equiv_{p,q} (G[F'],B \setminus \{v\})$ we conclude that \[\msotype^{p,q}(G[F' \cup \{v\}],B) = \msotype^{p,q}(G[\widetilde{F}' \cup \{v\}],B).\]
Note that $\phi_{\tw<t}$ is a subformula of $\psi'$, hence it is a \cpmsotwo formula of quantifier rank at most~$q$.
Therefore, the equality of types above implies that we have $G[F' \cup \{v\}]\models \phi_{\tw <t}$ if and only if $G[\widetilde{F}' \cup \{v\}]\models \phi_{\tw <t}$. Since the latter is true due to $\widetilde{F}=\widetilde{F}' \cup \{v\}$ being feasible for $(u,B)$, we conclude that also $\tw(G[F' \cup \{v\}<t$, and hence $F' \cup \{v\}$ gets included in $\textsf{F}[u,B]$.

Finally, since
\[
    \wei(F' \cup \{v\}) = \wei(F') + \wei(v) \geq \wei(\widetilde{F}') + \wei(v) = \wei(\widetilde{F}),
\]
property (F3) holds.

\paragraph{Forget node.}
Consider a forget node $u$, let $u'$ be its only child, and let $v$ be the unique vertex in $X_{u'} \setminus X_{u}$.
Assume that the families $\textsf{F}[u',\cdot]$ are computed correctly.

Fix any $B \subseteq X_u$ of size at most $\ell$, note that $v \notin B$.
Let $\mathcal{F}' \coloneqq \textsf{F}[u',B] \cup \textsf{F}[u',B \cup \{v\}]$.
As $\textsf{F}[u,B]$ is the result of applying \cref{lem:compress} to $\mathcal{F}'$, property (F1) holds.
As $\textsf{F}[u,B] \subseteq \mathcal{F}'$ and every set in $\mathcal{F}'$ is feasible for $(u,B)$, property (F2) holds, too.

For (F3), consider any $\widetilde{F} \subseteq V_u$, feasible for $(u,B)$. By \cref{lem:compress}, it is sufficient that we find an appropriate representative for $\widetilde{F}'$ in $\mathcal{F}'$.

Suppose first that $v \notin \widetilde{F}$.
Then $\widetilde{F}$ is feasible for $(u',B)$ and thus, there is $F' \in \textsf{F}[u',B]$ that $B$-represents $\widetilde{F}$.
This $F'$ was included in $\mathcal{F}'$.

Now suppose that $v \in \widetilde{F}$.
Then $\widetilde{F}$ is feasible for $(u',B \cup \{v\})$.
By an analogous argument as in the previous paragraph, we conclude that a $B$-representative for $\widetilde{F}$ was included in $\mathcal{F}'$.

\paragraph{Join node.}
Let $u$ be a join node and let $u'$ and $u''$ be its two children.
Suppose that the families $\textsf{F}[u',\cdot]$ and $\textsf{F}[u'',\cdot]$ are computed correctly.
Consider some $B \subseteq X_u$ and a set $\widetilde{F} \subseteq V_u$ that is feasible for $(u,B)$.
As $\textsf{F}[u,B]$ is the family obtained by applying \cref{lem:compress} to the family 
\[\mathcal{F} = \{F' \cup F'' ~|~ F' \in \textsf{F}[u',B], F'' \in \textsf{F}[u', B] \text{ and } \tw(G[F' \cup F'')] <t\},\]
it is sufficient to argue that there is a $B$-representative for $F$ in $\mathcal{F}$.
Let $\widetilde{F}' = \widetilde{F} \cap V_{u'}$ and $\widetilde{F}'' = \widetilde{F} \cap V_{u''}$.
Note that $\widetilde{F}' \cap \widetilde{F}'' = B$,
and thus, $\widetilde{F}'$ is feasible for $(u',B)$ and $\widetilde{F}''$ is feasible for~$(u'',B)$.
Consequently, there is a $B$-representative $F'$ for $\widetilde{F}'$ in $\textsf{F}[u',B]$ and a $B$-representative $F''$ for $\widetilde{F}''$ in $\textsf{F}[u'',B]$.
Let $F = F' \cup F''$.

Observe that by \cref{prop:mso-type},
\begin{align*}
\msotype^{p,q}(\widetilde{F},B) = & \msotype^{p,q}(\widetilde{F}',B) \oplus \msotype^{p,q}(\widetilde{F}'',B),\\
\msotype^{p,q}(F,B) = & \msotype^{p,q}(F',B) \oplus \msotype^{p,q}(F'',B).
\end{align*}
As the types of the corresponding factors on the right hand sides are equal, we conclude that \[\msotype^{p,q}(\widetilde{F},B)  = \msotype^{p,q}(F,B).\]
However, we still need to argue that $\tw(G[F]) <t$, so that $F$ is included in $\textsf{F}[u,B]$.
For this, we use the same argument as for the introduce nodes: $\phi_{\tw <t}$ is a subformula of $\psi'$, and hence it is a \cpmsotwo formula of quantifier rank at most~$q$.
Consequently, $G[F] \models \phi_{\tw <t}$ if and only if $G[\widetilde{F}] \models \phi_{\tw <t}$, and latter indeed holds as $\widetilde{F}$ is feasible for $u,B$.
Now, (F3) follows easily from~\cref{lem:compress}.

\subsection{Running time}
Recall that $V(T)$ can be assumed to be polynomial in $n$, and for each $u$, the number of choices for $B \subseteq X_u$ is at most $n^{\ell} = n^{\Oh(k)}$.

Let us now estimate the cost of computing $\textsf{F}[u,B]$ for a join node $u$, as this operation is the most involved.
This requires iterating over all elements of $\textsf{F}[u',B] \times \textsf{F}[u'',B]$, where $u',u''$ are the children of~$u$.
For each pair of sets, we call the algorithm from \cref{thm:treewidthfpt} for the graph induced by their union,
and finally we apply \cref{lem:compress} to the family of these sets that induce a graph of treewidth at most $t$.
Thus, each entry is computed in time $k^{\Oh(k)}$. The analysis for the other node types proceeds in a similar way.

As $k \leq n$, the overall complexity of the algorithm is $n^{\Oh(k)}$. This completes the proof of \cref{thm:cmsotech}.

\section{Conclusion}

In the work that first introduced the concept of the tree-independence number, Yolov~\cite{DBLP:conf/soda/Yolov18} pointed out a certain drawback of this parameter: it is unbounded for some classes of graphs that are ``simple in structure,'' like complete bipartite graphs.
As a remedy, he suggested a refined version of the parameter, which later became known as the \emph{induced matching treewidth}~\cite{DBLP:conf/esa/LimaMMORS24}.
Again, this parameter is defined using tree decompositions with certain structural properties: this time, we bound the size of a largest induced matching whose every edge intersects a single bag of the decomposition. Induced matching treewidth is bounded for all classes of bounded tree-independence number, but can still be small in classes allowing large complete bipartite graphs.

Yolov~\cite{DBLP:conf/soda/Yolov18} showed that \MWIS is polynomial-time-solvable in classes of bounded induced matching treewidth.
Lima et al.~\cite{DBLP:conf/esa/LimaMMORS24} showed that some other problems, notably \textsc{Feedback Vertex Set},
are also tractable in these graphs. Furthermore, they conjectured that an analogue of \cref{thm:cmso-old} holds for this parameter too, i.e., every problem expressible as $(\tw <t,\psi)$-\MWIS  can be solved in polynomial time for graphs with constant induced matching treewidth.
This conjecture was recently confirmed by Bodlaender, Fomin, and Korhonen~\cite{DBLP:journals/corr/abs-2507-07975}.

However, similarly to \cref{thm:cmso-old}, the running time of their algorithm is at least double exponential in the parameter bound.
We believe it would be interesting to see an analogue of our \cref{thm:cmsotech} for graphs of bounded induced matching treewidth.

\paragraph*{Acknowledgement.}
The authors are grateful to Maria Chudnovsky for persistent encouragement.

\bibliographystyle{abbrv}
\bibliography{main}
\end{document}